\newtheorem{theorem}{Theorem}
\newtheorem{lemma}{Lemma}[section]
\newtheorem{defn}{Definition}[section]
\newcommand{\E}{\textrm{E}}
\newcommand{\poly}{\textrm{poly}}
\newtheorem*{theorem*}{Theorem 2}
\title{A Sublinear-Time Spectral Clustering Oracle with Improved Preprocessing Time}
\author{%
	Ranran Shen\thanks{School of Computer Science and Technology, 
		University of Science and Technology of China, 
		Hefei, China. Corresponding author: Pan Peng.} \\
	\texttt{ranranshen@mail.ustc.edu.cn} 
	\And
	Pan Peng\footnotemark[1]
	\\
	\texttt{ppeng@ustc.edu.cn}\\
}
\begin{document}

	\maketitle

	\begin{abstract}
		We address the problem of designing a sublinear-time spectral clustering oracle for graphs that exhibit strong clusterability. Such graphs contain $k$ latent clusters, each characterized by a large inner conductance (at least $\varphi$) and a small outer conductance (at most $\varepsilon$). Our aim is to preprocess the graph to enable clustering membership queries, with the key requirement that both preprocessing and query answering should be performed in sublinear time, and the resulting partition should be consistent with a $k$-partition that is close to the ground-truth clustering. Previous oracles have relied on either a $\poly(k)\log n$ gap between inner and outer conductances or exponential (in $k/\varepsilon$) preprocessing time. Our algorithm relaxes these assumptions, albeit at the cost of a slightly higher misclassification ratio. We also show that our clustering oracle is robust against a few random edge deletions. To validate our theoretical bounds, we conducted experiments on synthetic networks.
	\end{abstract}

	\section{Introduction}\label{sec:introduction}
	
	Graph clustering is a fundamental task in the field of graph analysis. Given a graph $G=(V,E)$ and an integer $k$, the objective of graph clustering is to partition the vertex set $V$ into $k$ disjoint clusters $C_1,\dots,C_k$. Each cluster should exhibit tight connections within the cluster while maintaining loose connections with the other clusters. This task finds applications in various domains, including community detection \cite{porter2009communities,fortunato2010community}, image segmentation \cite{felzenszwalb2004efficient} and bio-informatics \cite{paccanaro2006spectral}.
	
	However, global graph clustering algorithms, such as spectral clustering \cite{ng2001spectral}, modularity maximization \cite{newman2004finding}, density-based clustering \cite{ester1996density}, can be computationally expensive, especially for large datasets. For instance, spectral clustering is a significant algorithm for solving the graph clustering problem, which involves two steps. The first step is to map all the vertices to a $k$-dimensional Euclidean space using the Laplacian matrix of the graph. The second step is to cluster all the points in this $k$-dimensional Euclidean space, often employing the $k$-means algorithm. The time complexity of spectral clustering, as well as other global clustering algorithms, is $\poly(n)$, where $n = |V|$ denotes the size of the graph. As the graph size increases, the computational demands of these global clustering algorithms become impractical.
	
	Addressing this challenge, an effective approach lies in the utilization of local algorithms that operate within sublinear time. In this paper, our primary focus is on a particular category of such algorithms designed for graph clustering, known as \emph{sublinear-time spectral clustering oracles} \cite{peng2020robust,gluch2021spectral}. These algorithms consist of two phases: the preprocessing phase and the query phase, both of which can be executed in sublinear time. During the preprocessing phase, these algorithms sample a subset of vertices from $V$, enabling them to locally explore a small portion of the graph and gain insights into its cluster structure. In the query phase, these algorithms utilize the cluster structure learned during the preprocessing phase to respond to \textsc{WhichCluster}($G,x$) queries. The resulting partition defined by the output of \textsc{WhichCluster}($G,x$) should be consistent with a $k$-partition that is close to the ground-truth clustering.
	
	We study such oracles for graphs that exhibit strong clusterability, which are graphs that contain $k$ latent clusters, each characterized by a large inner conductance (at least $\varphi$) and a small outer conductance (at most $\varepsilon$). Let us assume $\varphi>0$ is some constant. In \cite{peng2020robust} (see also \cite{czumaj2015testing}), a robust clustering oracle was designed with preprocessing time approximately $O(\sqrt{n}\cdot \poly(\frac{k\log n}{\varepsilon}))$, query time approximately $O(\sqrt{n}\cdot \poly(\frac{k\log n}{\varepsilon}))$, misclassification error (i.e., the number of vertices that are misclassified with respect to a ground-truth clustering) approximately $O(kn\sqrt{\varepsilon})$. The oracle relied on a $\poly(k)\log n$ gap between inner and outer conductance. In \cite{gluch2021spectral}, a clustering oracle was designed with preprocessing time approximately $2^{\poly(\frac{k}{\varepsilon})}\poly(\log n)\cdot n^{1/2+O(\varepsilon)}$, query time approximately $\poly(\frac{k\log n}{\varepsilon})\cdot n^{1/2+O(\varepsilon)}$, misclassification error $O(\log k\cdot \varepsilon)|C_i|$ for each cluster $C_i, i\in[k]$ 
	and it takes approximately $O({\rm poly}(\frac{k}{\varepsilon})\cdot n^{1/2+O(\varepsilon)}\cdot {\rm poly}(\log n))$ space. This oracle relied on a $\log k$ gap between inner and outer conductance.
	
	One of our key contributions in this research is a new sublinear-time spectral clustering oracle that offers enhanced preprocessing efficiency. Specifically, we introduce an oracle that significantly reduces both the preprocessing and query time, performing in $\poly(k\log n)\cdot n^{1/2+O(\varepsilon)}$ time and reduces the space complexity, taking $O({\rm poly}(k)\cdot n^{1/2+O(\varepsilon)}\cdot {\rm poly}(\log n))$ space. This approach relies on a  $\poly(k)$ gap between the inner and outer conductances, while maintaining a misclassification error of $O(\poly(k)\cdot \varepsilon^{1/3})|C_i|$ for each cluster $C_i, i\in[k]$
	. Moreover, our oracle offers practical implementation feasibility, making it well-suited for real-world applications. In contrast, the clustering oracle proposed in \cite{gluch2021spectral} presents challenges in terms of implementation (mainly due to the exponential dependency on $k/\varepsilon$).
	
	We also investigate the sensitivity of our clustering oracle to edge perturbations. This analysis holds significance in various practical scenarios where the input graph may be unreliable due to factors such as privacy concerns, adversarial attacks, or random noises \cite{PYsensitivity}. We demonstrate the robustness of our clustering oracle by showing that it can accurately identify the underlying clusters in the resulting graph even after the random deletion of one or a few edges from a well-clusterable graph.

	\subsection{Basic definitions}
	Graph clustering problems often rely on conductance as a metric to assess the quality of a cluster. Several recent studies (\cite{czumaj2015testing,dey2019spectral,peng2020robust,gluch2021spectral,manghiuc2021hierarchical}) have employed conductance in their investigations. Hence, in this paper, we adopt the same definition to characterize the cluster quality. We state our results for $d$-regular graphs for some constant $d\geq 3$, though they can be easily generalized to graphs with maximum degree at most $d$ (see Section \ref{sec:preliminaries}).
	
	\begin{defn}[Inner and outer conductance]
		\label{defn:conductance}
		{\rm 
			Let $G=(V,E)$ be a $d$-regular $n$-vertex graph. For a set $S\subseteq C\subseteq V$, we let $E(S,C\backslash S)$ denote the set of edges with one endpoint in $S$ and the other endpoint in $C\backslash S$.}
		{\rm The} outer conductance of a set $C$ {\rm is defined to be }$\phi_{{\rm out}}(C,V)=\frac{|E(C,V\backslash C)|}{d|C|}$. {\rm The} inner conductance of a set $C$ {\rm is defined to be }$\phi_{\rm in}(C) = \min\limits_{S\subseteq C, 0< |S|\le \frac{|C|}{2}}{\phi_{\rm out}(S,C)}=\min\limits_{S\subseteq C, 0< |S|\le \frac{|C|}{2}}{\frac{|E(S,C\backslash S)|}{d|S|}}$ {\rm if $|C|>1$ and one otherwise. Specially, }the conductance of graph $G$ {\rm is defined to be} $\phi(G) = \min\limits_{C\subseteq V, 0< |C|\le \frac{n}{2}}{\phi_{\rm out}(C,V)}$.
	\end{defn}
	
	Note that based on the above definition, for a cluster $C$, the smaller the $\phi_{\rm out}(C,V)$ is, the more loosely connected with the other clusters and the bigger the $\phi_{\rm in}(C)$ is, the more tightly connected within $C$. For a high quality cluster $C$, we have $\phi_{\rm out}(C,V)\ll \phi_{\rm in}(C)\le 1$.
	
	\begin{defn}[$k$-partition]
		\label{defn:partition}
		{\rm Let $G=(V,E)$ be a graph.} A $k$-partition of $V$ {\rm is a collection of disjoint subsets $C_1,\dots,C_k$ such that $\cup_{i=1}^k{C_i}=V$.}
	\end{defn}
	
	Based on the above,  
	we have the following definition of clusterable graphs.
	
	\begin{defn}[$(k,\varphi,\varepsilon)$-clustering]
		\label{defn:clustering}
		{\rm Let $G=(V,E)$ be a $d$-regular graph. A }$(k,\varphi,\varepsilon)$-clustering of $G$ {\rm is a $k$-partition of $V$, denoted by $C_1,\dots,C_k$, such that for all $i\in[k]$, $\phi_{\rm in}(C_i)\ge \varphi$, $\phi_{\rm out}(C_i,V)\le \varepsilon$ and for all $i,j\in[k]$ one has $\frac{|C_i|}{|C_j|}\in O(1)$. $G$ is called a }$(k,\varphi,\varepsilon)$-clusterable graph {\rm if there exists a $(k,\varphi,\varepsilon)$-clustering of $G$.} 
	\end{defn}

	\subsection{Main results}
	Our main contribution is a sublinear-time spectral clustering oracle with improved preprocessing time for $d$-regular $(k,\varphi,\varepsilon)$-clusterable graphs. We assume query access to the adjacency list of a graph $G$, that is, one can query the $i$-th neighbor of any vertex in constant time. 
	\begin{theorem}
		\label{thm:main-result}
		Let $k\ge 2$ be an integer, $\varphi\in (0,1)$. Let $G=(V,E)$ be a $d$-regular $n$-vertex graph that admits a $(k,\varphi,\varepsilon)$-clustering $C_1,\dots,C_k$, $\frac{\varepsilon}{\varphi^2}\ll \frac{\gamma^3}{k^{\frac{9}{2}}\cdot \log^3k}$ and for all $i\in[k]$, $\gamma\frac{n}{k}\le |C_i|\le \frac{n}{\gamma k}$, where $\gamma$ is a constant that is in $(0.001,1]$. 
		There exists an algorithm that has query access to the adjacency list of $G$ and constructs a clustering oracle in $O(n^{1/2+O(\varepsilon/\varphi^2)}\cdot {\rm poly}(\frac{k\log n}{\gamma \varphi}))$ preprocessing time and takes $O(n^{1/2+O(\varepsilon/\varphi^2)}\cdot {\rm poly}(\frac{k\log n}{\gamma}))$ space. Furthermore, with probability at least $0.95$, the following hold:
		
		\begin{enumerate}[1.]
			\item Using the oracle, the algorithm can answer any \textsc{WhichCluster} query in $O(n^{1/2+O(\varepsilon/\varphi^2)}\cdot {\rm poly}(\frac{k\log n}{\gamma \varphi}))$ time and a \textsc{WhichCluster} query takes $O(n^{1/2+O(\varepsilon/\varphi^2)}\cdot {\rm poly}(\frac{k\log n}{\gamma}))$ space.
			
			\item Let $U_i:=\{x\in V: \textsc{WhichCluster}(G,x) = i\}, i\in[k]$ be the clusters recovered by the algorithm. There exists a permutation $\pi:[k]\rightarrow [k]$ such that for all $i\in[k]$, $|U_{\pi(i)}\triangle C_i|\le O(\frac{k^{\frac{3}{2}}}{\gamma}\cdot (\frac{\varepsilon}{\varphi^2})^{1/3})|C_i|$.
		\end{enumerate}
	\end{theorem}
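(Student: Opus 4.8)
The plan is to follow the standard two-phase template for sublinear spectral clustering oracles, but replacing the expensive components. The high-level strategy: during preprocessing, sample a set $S$ of $s = n^{1/2+O(\varepsilon/\varphi^2)}\cdot\poly(k\log n/(\gamma\varphi))$ seed vertices uniformly at random, and for each seed run $O(\log n)$ short random walks of length $t = O(\frac{\log n}{\varphi^2})$ to estimate, implicitly, the distribution $\mathbf{m}_x = \mathbf{e}_x W^t$ (where $W$ is the lazy random walk matrix). The key structural fact to invoke is the spectral dot-product embedding: on a $(k,\varphi,\varepsilon)$-clusterable graph, the vectors $f_x$ obtained by projecting $\mathbf{e}_x$ onto the span of the bottom $k$ eigenvectors of the Laplacian are tightly concentrated around $k$ cluster centers $\mu_1,\dots,\mu_k$, with the sum of within-cluster squared deviations bounded by $O(\varepsilon/\varphi^2)$ times the natural scale, while the centers are pairwise separated by $\Omega(\sqrt{k/n})$ up to $\poly(k)$ factors. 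The dot products $\langle f_x, f_y\rangle$ are, up to lower-order error, equal to $\langle \mathbf{m}_x, \mathbf{m}_y\rangle_{\text{deg}^{-1}}$ (a reweighted inner product), which is estimable by collision counting between independent walks; this is what makes it sublinear and avoids the explicit $k$-dimensional embedding / $k$-means of \cite{gluch2021spectral}.

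The proof of correctness then proceeds in the following steps. First, establish the embedding guarantees: cite/derive from the dot-product embedding analysis that $\sum_x \|f_x - \mu_{\sigma(x)}\|^2 \le O(\varepsilon/\varphi^2)\cdot\frac{k}{n}$ (appropriately normalized, using $|C_i|\asymp n/k$ from the $\gamma$-balance hypothesis), and that $\|\mu_i-\mu_j\|^2 \gtrsim \frac{\gamma k}{n}\cdot\frac{1}{\poly(k)}$. Second, by Markov's inequality, all but an $O((\varepsilon/\varphi^2)^{1/3})$-fraction of vertices in each cluster are "good," meaning $\|f_x-\mu_{\sigma(x)}\| \le (\varepsilon/\varphi^2)^{1/3}\cdot(\text{center scale})$ — this is exactly where the cube-root loss in the misclassification bound $O(k^{3/2}\gamma^{-1}(\varepsilon/\varphi^2)^{1/3})|C_i|$ enters, traded against the weaker $\poly(k)$ conductance gap. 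Third, show that the random seed set $S$ hits at least one good vertex from each cluster with high probability (using $|C_i|\gtrsim \gamma n/k$ and $|S|$ large enough), and that good seeds from the same cluster have large estimated dot product while good seeds from different clusters have small estimated dot product — so a simple greedy/threshold clustering of the seeds recovers $k$ groups, one per cluster, correctly up to the permutation $\pi$. Fourth, for a query vertex $x$, estimate $\langle \mathbf{m}_x, \mathbf{m}_{\text{seed}}\rangle$ against a representative seed from each recovered group and assign $x$ to the group with the largest estimate; a good $x$ is assigned to its true cluster, so the total misclassification per cluster is bounded by the number of bad vertices plus the estimation-failure probability.

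I expect the main obstacle to be the error analysis of the dot-product estimates under the relaxed $\poly(k)$ gap: with only a polynomial gap, the margin between intra-cluster and inter-cluster dot products is polynomially small in $k$, so the number of random walks (and hence the additive estimation error) must be controlled very tightly, and one must carefully track how the spectral error $\varepsilon/\varphi^2$ propagates through the approximation $\langle f_x,f_y\rangle \approx \langle \mathbf{m}_x,\mathbf{m}_y\rangle$ — this requires bounding the contribution of the top-$k$ eigenvalue gap (that $\lambda_{k+1} \gtrsim \varphi^2$ while $\lambda_k \lesssim \varepsilon$, i.e. the higher-order Cheeger / gap inequalities) and showing the walk length $t = \Theta(\frac{\log n}{\varphi^2})$ simultaneously kills the top-$k$ tail and preserves the bottom-$k$ signal. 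The second delicate point is handling the seed-clustering step robustly: since some seeds may be "bad," the greedy threshold procedure must be shown to still produce exactly $k$ well-separated groups, which needs a clean statement that bad seeds are a small enough fraction that they cannot form a spurious cluster of their own nor merge two true clusters — this is a counting argument using the balance condition and the fact that bad vertices total only an $O((\varepsilon/\varphi^2)^{1/3})$-fraction overall. The remaining steps (walk simulation in sublinear time, collision-based estimators, space accounting) are routine once these two structural pieces are in place.
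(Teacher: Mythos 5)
Your overall architecture is the same as the paper's: estimate pairwise spectral dot products $\langle f_x,f_y\rangle$ via random walks, threshold them to group a sampled seed set into $k$ components (one per cluster), answer queries by comparing the query vertex against the seeds, and charge all misclassification to a set of ``bad'' vertices whose size is controlled by a Markov argument applied to $\sum_x\|f_x-\mu_x\|^2\le O(k\varepsilon/\varphi^2)$ --- splitting the budget as $\beta\cdot\frac{1}{\beta}$ with $\beta\approx\sqrt{k}(\varepsilon/\varphi^2)^{1/3}$ is exactly where the paper's cube root comes from too. The paper treats the dot-product estimator as a black box (Theorem 2 of Gluch et al.) rather than re-deriving the walk-length/collision analysis, and it adds one condition to the good-vertex definition that you omit, namely $\langle f_x,\mu_i\rangle\ge 0.96\|\mu_i\|_2^2$ (imported from Lemma 31 of Gluch et al.); your distance-to-center bound can substitute for it after a short calculation, so that is not a real gap.

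The genuine problem is your parameter allocation. You put the $n^{1/2+O(\varepsilon/\varphi^2)}$ factor into the \emph{number of seeds} $s$ and run only $O(\log n)$ walks per seed. Both halves are wrong: $O(\log n)$ walks cannot resolve quantities at scale $\Theta(k/n)$ (the $n^{1/2+O(\varepsilon/\varphi^2)}$ factor must live in the number of walks / the size of the internal sample of the dot-product oracle, which is where Gluch et al. put it), and with $s=n^{1/2+O(\varepsilon/\varphi^2)}$ seeds the preprocessing phase performs $s^2$ pairwise dot-product estimations, each costing $n^{1/2+O(\varepsilon/\varphi^2)}$, for a total of $n^{3/2+O(\varepsilon/\varphi^2)}$ --- not sublinear. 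The paper instead takes $s=\frac{10k\log k}{\gamma}$, which is just large enough to hit every cluster (each has size $\ge\gamma n/k$) and small enough that, since the bad-vertex fraction is at most $\frac{1}{50s}$ under the hypothesis $\frac{\varepsilon}{\varphi^2}\ll\frac{\gamma^3}{k^{9/2}\log^3 k}$, a union bound makes \emph{all} seeds good with probability $\ge 1-\frac{1}{50}$. This also dissolves the second difficulty you flag (bad seeds forming spurious components or merging clusters): with a constant-times-$k\log k$ seed set there are simply no bad seeds with high probability, so no robust seed-clustering argument is needed. Fix $s$ and the walk counts and your outline matches the paper's proof.
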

	
	Specifically, for every graph $G=(V,E)$ that admits a $k$-partition $C_1,\dots,C_k$ with \emph{constant} inner conductance $\varphi$ and outer conductance $\varepsilon\ll O(\frac{1}{\poly (k)})$, our oracle has preprocessing time $\approx 
	n^{1/2+O(\varepsilon)}\cdot \poly (k\log n)$, query time $\approx 
	n^{1/2+O(\varepsilon)}\cdot \poly (k\log n)$, space $\approx$ $O(n^{1/2+O(\varepsilon/\varphi^2)}\cdot {\rm poly}(k\log n))$ and misclassification error $\approx O(\poly (k) \cdot \varepsilon^{1/3})|C_i|$ for each cluster $C_i, i\in[k]$. In comparison to \cite{peng2020robust}, our oracle relies on a smaller gap between inner and outer conductance (specifically $O({\poly(k)}\log n)$). In comparison to \cite{gluch2021spectral}, our oracle has a smaller preprocessing time and a smaller space at the expense of a slightly higher misclassification error of $O(\poly(k)\cdot \varepsilon^{1/3})|C_i|$ instead of $O(\log k\cdot \varepsilon)|C_i|$ and a slightly worse conductance gap of $\varepsilon \ll O(\varphi^2/{\rm poly}(k))$ instead of $\varepsilon\ll O(\varphi^3/{\rm log}(k))$. It's worth highlighting that our space complexity significantly outperforms that of \cite{gluch2021spectral} (i.e., $O(n^{1/2+O(\varepsilon/\varphi^2)}\cdot {\rm poly}(\frac{k}{\varepsilon}\cdot\log n))$), particularly in cases where $k$ is fixed and $\varepsilon$ takes on exceptionally small values, such as $\varepsilon=\frac{1}{n^c}$ for sufficiently small constant $c>0$, since the second term in our space complexity does \emph{not} depend on $\varepsilon$ in comparison to the one in \cite{gluch2021spectral}.
	
	Another contribution of our work is the verification of the robustness of our oracle against the deletion of one or a few random edges. The main idea underlying the proof is that a well-clusterable graph is still well-clusterable (with a slightly worse clustering quality) after removing a few random edges, which in turn is built upon the intuition that after removing a few random edges, an expander graph remains an expander. See the complete statement and proof of this claim in Appendix \ref{sec:theorem2}.
	
	\begin{theorem}[Informal; Robust against random edge deletions]
		\label{thm:robust-result}
		Let $c > 0$ be a constant. Let $G_0$ be a graph satisfying the  similar conditions as stated in Theorem \ref{thm:main-result}. Let $G$ be a graph obtained from $G_0$ by randomly deleting $c$ edges. Then there exists a clustering oracle for $G$ with the same guarantees as presented in Theorem \ref{thm:main-result}.
	\end{theorem}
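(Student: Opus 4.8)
The plan is to derive Theorem~\ref{thm:robust-result} from Theorem~\ref{thm:main-result} by a black‑box reduction. Writing $F$ for the (random) set of $c$ deleted edges and $G=G_0\setminus F$, it suffices to show that with probability $1-o(1)$ over $F$ the graph $G$ still admits a $(k,\varphi',\varepsilon')$‑clustering given by the \emph{same} partition $C_1,\dots,C_k$, with $\varphi'=\varphi/2$, $\varepsilon'=\varepsilon$, and unchanged cluster sizes; then the hypotheses of Theorem~\ref{thm:main-result} hold for $G$ (the quantity $\varepsilon'/\varphi'^2=4\varepsilon/\varphi^2$ is still $\ll\gamma^3/(k^{9/2}\log^3 k)$ after absorbing the constant into the ``$\ll$''). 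Running the oracle of Theorem~\ref{thm:main-result} on $G$ — using the stated generalization to graphs of maximum degree at most $d$, since $G$ is no longer regular — then produces clusters $U_i$ with $|U_{\pi(i)}\triangle C_i|\le O(\tfrac{k^{3/2}}{\gamma}(4\varepsilon/\varphi^2)^{1/3})|C_i|=O(\tfrac{k^{3/2}}{\gamma}(\varepsilon/\varphi^2)^{1/3})|C_i|$, and a union bound over the $0.05$ failure probability of Theorem~\ref{thm:main-result} and the $o(1)$ bad event for $F$ costs only a lower‑order term, which yields the claim.

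Two of the three required properties of $G$ are immediate. The cluster sizes do not change, so the balance condition $\gamma n/k\le|C_i|\le n/(\gamma k)$ is preserved with the same $\gamma$. For the outer conductance, deleting edges only decreases $|E(C_i,V\setminus C_i)|$ while the normalization $d|C_i|$ in Definition~\ref{defn:conductance} is unchanged, so $\phi_{\mathrm{out}}^{G}(C_i,V)\le\phi_{\mathrm{out}}^{G_0}(C_i,V)\le\varepsilon$. Everything thus reduces to the inner conductances, i.e.\ to the statement ``each cluster, being an expander, remains an expander after deleting a few random edges.'' Fix $i$ and a candidate witness $S\subseteq C_i$ with $0<|S|\le|C_i|/2$. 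Since $\phi_{\mathrm{in}}^{G_0}(C_i)\ge\varphi$ we have $|E_{G_0}(S,C_i\setminus S)|\ge\varphi d|S|$, and deletion removes at most $c$ of these edges, so $\phi_{\mathrm{out}}^{G}(S,C_i)\ge\varphi-\tfrac{c}{d|S|}$; for $|S|\ge 2c/(\varphi d)$ this already gives $\phi_{\mathrm{out}}^{G}(S,C_i)\ge\varphi/2$ \emph{deterministically}. (If $c\le\varphi d/2$ this holds for all $S$ and no randomness is needed at all; randomness is only needed when $c$ is large relative to $\varphi d$.) It remains to handle the finitely many sizes $|S|<s_0:=2c/(\varphi d)$.

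For this small regime one first reduces to sets $S$ that are connected in $G[C_i]$ (splitting a disconnected $S$ into its components only decreases the ratio), and notes that such an $S$ can fail $\phi_{\mathrm{out}}^{G}(S,C_i)\ge\varphi/2$ only if $F$ contains at least one boundary edge of $S$ \emph{and} $|E_{G_0}(S,C_i\setminus S)|<2c$; call such $S$ \emph{fragile}. Counting the $d|S|$ edge slots at $S$ shows every fragile set either is incident to one of the $O(\varepsilon n)$ vertices that carry an inter‑cluster edge, or has all of its edges inside $C_i$. In the first case, each such vertex anchors only $O(1)$ connected sets of size $<s_0$ (as $d,s_0$ are constants), so there are only $O(\varepsilon n)$ fragile sets of this kind, each destroyed with probability $O(1/n)$, giving a union bound of $O(\varepsilon)=o(1)$. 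In the second case, a short calculation using $d\ge 3$ and $\varphi d\ge 1$ shows that driving the conductance of such a set below $\varphi/2$ requires deleting at least \emph{two} of its $O(1)$ boundary edges, so the union bound over the $O(n)$ internal fragile sets is $O(n\cdot n^{-2})=o(1)$. Combined with the event that $F$ is a matching (probability $1-O(c^2/n)$), this yields that with probability $1-o(1)$ every $S\subseteq C_i$ has $\phi_{\mathrm{out}}^{G}(S,C_i)\ge\varphi/2$, i.e.\ $\phi_{\mathrm{in}}^{G}(C_i)\ge\varphi/2$, completing the reduction.

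The main obstacle is precisely this last step: ruling out low‑conductance small sets after a random deletion. A naive union bound over all constant‑size subsets of a cluster is useless ($n^{\Theta(1)}$ of them), and the fix rests on two structural facts — that the minimizing set may be taken connected, and that a fragile small set is either anchored on the $O(\varepsilon n)$‑size inter‑cluster boundary or needs more than one deleted edge to be destroyed. The clean form of the second fact uses that $\varphi$ is a not‑too‑small constant (roughly, $\varphi d\ge 1$); when $\varphi$ is allowed to be very small, internal ``near‑bridge'' configurations (small blobs joined to the rest of their cluster by one edge and carrying no inter‑cluster edge) can proliferate, and the argument must either be refined or the clustering of $G$ mildly modified. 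That is the one genuinely technical point; the balance, outer‑conductance, and large‑set parts are routine.
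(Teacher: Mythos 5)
Your overall reduction (show that $G$ still admits a clustering with the same partition $C_1,\dots,C_k$ and slightly worse parameters, then invoke Theorem~\ref{thm:main-result} via the $d$-bounded generalization) is exactly the paper's strategy, and your handling of the balance condition, the outer conductance, and the large witness sets is fine. But your route to the inner conductance is entirely different from the paper's, and the hole you flag at the end is a real one. The paper never argues combinatorially over witness sets: it regularizes each cluster with self-loops, notes that deleting a single edge perturbs the cluster's normalized Laplacian by a symmetric matrix of spectral norm at most $\sqrt{10}/(2d)$, and combines Weyl's inequality (Lemma~\ref{lemma-weyl}) with Cheeger's inequality (Lemma~\ref{lemma-cheeger}) to get $\lambda_2(L_i^c)\ge\tfrac12\lambda_2(L_i^0)$ and hence $\phi_{\rm in}\ge\varphi^2/8$ \emph{deterministically}, for an arbitrary choice of the at most $c\le d\varphi^2/(2\sqrt{10})$ deleted edges inside a cluster; randomness is used only in a Chernoff bound showing that $O(kd^2/(\log k+d))$ uniform deletions put at most $c$ of them in any one cluster. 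The price of that argument is quadratic in $\varphi$, which is why the formal statement in Appendix~\ref{sec:theorem2} assumes $\varepsilon/\varphi^4\ll\gamma^3/(k^{9/2}\log^3 k)$ and yields misclassification $O\bigl(\tfrac{k^{3/2}}{\gamma}(\varepsilon/\varphi^4)^{1/3}\bigr)|C_i|$ — your approach, where it works, would give the stronger $(\varepsilon/\varphi^2)^{1/3}$ bound.

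The gap is the small-$\varphi$ regime, and under the theorem's hypotheses ($\varphi\in(0,1)$ arbitrary) it is not a removable technicality. If $\varphi\le 1/(ds)$ for some $s<s_0$, a cluster of $G_0$ may legitimately contain $\Theta(n)$ disjoint ``pendant blobs'': connected sets of size $s$ joined to the rest of $C_i$ by a single edge and carrying no inter-cluster edge. Each such bridge is deleted with probability $\Theta(1/n)$, so with constant probability some blob is disconnected, driving $\phi_{\rm in}(C_i)$ to $0$ rather than to $\varphi/2$; the fixed partition then fails to be a $(k,\varphi',\varepsilon)$-clustering for \emph{any} $\varphi'>0$, and re-clustering changes the ground truth against which the misclassification error is measured. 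Two smaller issues: your first-case union bound is $O(\varepsilon)$ with a hidden constant of order $(ed)^{s_0}$, $s_0=2c/(\varphi d)$ — a constant, not $o(1)$ in $n$, and potentially large when $\varphi d$ is small — and it must be charged against the overall $0.95$ success probability. The cheapest repair is to do what the paper does: accept a Cheeger-type quadratic loss for the problematic small sets (or, equivalently, impose a lower bound on $\varphi^2 d$, which the paper's hypothesis $c\le d\varphi^2/(2\sqrt{10})$ does implicitly).
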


	\subsection{Related work}
	Sublinear-time algorithms for graph clustering have been extensively researched. Czumaj et al. \cite{czumaj2015testing} proposed a property testing algorithm capable of determining whether a graph is $k$-clusterable or significantly far from being $k$-clusterable in sublinear time. This algorithm, which can be adapted to a sublinear-time clustering oracle, was later extended by Peng \cite{peng2020robust} to handle graphs with noisy partial information through a robust clustering oracle. Subsequent improvements to both the testing algorithm and the oracle were introduced by Chiplunkar et al. \cite{chiplunkar2018testing} and Gluchowski et al. \cite{gluch2021spectral}. Recently, Kapralov et al. \cite{HiClusterfirst,Kapralov0LM23} presented a hierarchical clustering oracle specifically designed for graphs exhibiting a pronounced hierarchical structure. This oracle offers query access to a high-quality hierarchical clustering at a cost of $\poly(k) \cdot n^{1/2+O(\gamma)}$ per query. However, it is important to note that their algorithm does not provide an oracle for flat $k$-clustering, as considered in our work, with the same query complexity. Sublinear-time clustering oracles for signed graphs have also been studied recently \cite{NP22}.
	
	The field of \emph{local graph clustering} \cite{spielman2013local,andersen2006local,andersen2009finding,andersen2016almost,zhu2013local,orecchia2014flow} is also closely related to our research. In this framework, the objective is to identify a cluster starting from a given vertex within a running time that is bounded by the size of the output set, with a weak dependence on $n$. Zhu et al. \cite{zhu2013local} proposed a local clustering algorithm that produces a set with low conductance when both inner and outer conductance are used as measures of cluster quality. It is worth noting that the running times of these algorithms are sublinear only if the target set's size (or volume) is small, for example, at most $o(n)$. In contrast, in our setting, the clusters of interest have a minimum size that is $\Omega(n/k)$.
	
	Extensive research has been conducted on fully or partially recovering clusters in the presence of noise within the ``global algorithm regimes''. Examples include recovering the planted partition in the \emph{stochastic block model} with modeling errors or noise \cite{CL15:robust,GV16:community,MPW16:robust,MMV16:learning}, \emph{correlation clustering} on different ground-truth graphs in the \emph{semi-random} model \cite{MS10:correlation,CJSX14:clustering,GRSY14:tight,MMV15:correlation}, and graph partitioning in the \emph{average-case} model \cite{MMV12:approximation,MMV14:constant,MMV15:correlation}. It is important to note that all these algorithms require at least linear time to run.

	\section{Preliminaries}\label{sec:preliminaries}
	
	Let $G=(V,E)$ denote a $d$-regular undirected and unweighted graph, where $V:=\{1,\dots,n\}$. Throughout the paper, we use $i\in[n]$ to denote $1\le i\le n$ and all the vectors will be column vectors unless otherwise specified or transposed to row vectors. For a vertex $x\in V$, let $\mathbbm{1}_x\in \mathbb{R}^n$ denote the indicator of $x$, which means $\mathbbm{1}_x(i)=1$ if $i=x$ and 0 otherwise. For a vector $\mathbf{x}$, we let $\Vert\mathbf{x}\Vert_2=\sqrt{\sum_i{\mathbf{x}(i)^2}}$ denote its $\ell_2$ norm. For a matrix $A\in \mathbb{R}^{n\times n}$, we use $\Vert A\Vert$ to denote the spectral norm of $A$, and we use $\Vert A\Vert_F$ to denote the Frobenius norm of $A$. For any two vectors $\mathbf{x}, \mathbf{y}\in \mathbb{R}^n$, we let $\langle \mathbf{x},\mathbf{y}\rangle = \mathbf{x}^T\mathbf{y}$ denote the dot product of $\mathbf{x}$ and $\mathbf{y}$. For a matrix $A\in \mathbb{R}^{n\times n}$, we use $A_{[i]}\in \mathbb{R}^{n\times i}$ to denote the first $i$ columns of $A$, $1\le i\le n$.
	
	Let $A\in \mathbb{R}^{n\times n}$ denote the adjacency matrix of $G$ and let $D\in \mathbb{R}^{n\times n}$ denote a diagonal matrix. For the adjacency matrix $A$, $A(i,j)=1$ if $(i,j)\in E$ and 0 otherwise, $u,v\in[n]$. For the diagonal matrix $D$, $D(i,i)=\deg(i)$, where $\deg(i)$ is the degree of vertex $i, i\in[n]$. We denote with $L$ the normalized Laplacian of $G$ where $L=D^{-1/2}(D-A)D^{-1/2}=I-\frac{A}{d}$. For $L$, we use $0\le \lambda_1\le \dots\le \lambda_n\le 2$ \cite{chung1997spectral} to denote its eigenvalues and we use $u_1,\dots,u_n\in \mathbb{R}^n$ to denote the corresponding eigenvectors. Note that the corresponding eigenvectors are not unique, in this paper, we let $u_1,\dots,u_n$ be an orthonormal basis of eigenvectors of $L$. Let $U\in \mathbb{R}^{n\times n}$ be a matrix whose $i$-th column is $u_i,i\in[n]$, then for every vertex $x\in V$, $f_x=U_{[k]}^T\mathbbm{1}_x$.
	For any two sets $S_1$ and $S_2$, we let $S_1\triangle S_2$ denote the symmetric difference between $S_1$ and $S_2$.
	
	\textbf{From $\bm{d}$-bounded graphs to $\bm{d}$-regular graphs.} 
	For a $d$-bounded graph $G^{'}=(V,E)$, we can get a $d$-regular graph $G$ from $G^{'}$ by adding $d-\deg(x)$ self-loops with weight $1/2$ to each vertex $x\in V$. Note that the lazy random walk on $G$ is equivalent to the random walk on $G'$, with the random walk satisfying that if we are at vertex $x$, then we jump to a random neighbor with probability $\frac{1}{2d}$ and stay at $x$ with probability $1-\frac{\deg(x)}{2d}$. We use $w_{self}(x)=(d-\deg(x))\cdot \frac{1}{2}$ to denote the the weight of all self-loops of $x\in V$.
	
	Our algorithms in this paper are based on the properties of the dot product of spectral embeddings, so we also need the following definition.
	
	\begin{defn}[Spectral embedding]
		\label{defn:embedding}
		{\rm For a graph $G=(V,E)$ with $n=|V|$ and an integer $2\le k\le n$, we use $L$ denote the normalized Laplacian of $G$. Let $U_{[k]}\in \mathbb{R}^{n\times k}$ denote the matrix of the bottom $k$ eigenvectors of $L$. Then for every $x\in V$, }the spectral embedding of $x$, {\rm denoted by $f_x\in \mathbb{R}^k$, is the $x$-row of $U_{[k]}$, which means $f_x(i)=u_i(x), i\in[k]$.}
	\end{defn}
	
	\begin{defn}[Cluster centers]
		\label{defn:cluster-means}
		{\rm Let $G=(V,E)$ be a $d$-regular graph that admits a $(k,\varphi,\varepsilon)$-clustering $C_1,\dots,C_k$. The} cluster center $\mu_i$ {\rm of $C_i$ is defined to be $\mu_i = \frac{1}{|C_i|}\sum_{x\in C_i}{f_x}$, $i\in [k]$.}
	\end{defn}
	
	The following lemma shows that the dot product of two spectral embeddings can be approximated in $\widetilde{O}(n^{1/2+O(\varepsilon/\varphi^2)}\cdot \poly(k))$ time.
	
	\begin{lemma}[Theorem 2, \cite{gluch2021spectral}]
		\label{dot-product}
		Let $\varepsilon,\varphi\in(0,1)$ with $\varepsilon\le \frac{\varphi^2}{10^5}$. Let $G=(V,E)$ be a $d$-regular graph that admits a $(k,\varphi,\varepsilon)$-clustering $C_1,\dots,C_k$. Let $\frac{1}{n^5}<\xi<1$. Then there exists an algorithm $\textsc{InitializeOracle}$($G, 1/2, \xi$) that computes in time $(\frac{k}{\xi})^{O(1)}\cdot n^{1/2+O(\varepsilon/\varphi^2)}\cdot (\log n)^3\cdot \frac{1}{\varphi^2}$ a sublinear space data structure $\mathcal{D}$ of size $(\frac{k}{\xi})^{O(1)}\cdot n^{1/2+O(\varepsilon/\varphi^2)}\cdot (\log n)^3$ such that with probability at least $1-n^{-100}$ the following property is satisfied: For every pair of vertices $x,y\in V$, the algorithm $\textsc{SpectralDotProductOracle}$($G, x, y, 1/2, \xi, \mathcal{D}$) computes an output value $\langle f_x, f_y\rangle_{\rm apx}$ such that with probability at least $1-n^{-100}$
		\begin{linenomath*}
			\[
			\Big| \langle f_x,f_y\rangle_{\rm apx}-\langle f_x,f_y\rangle\Big|\le \frac{\xi}{n}
			.\]
		\end{linenomath*}
		The running time of $\textsc{SpectralDotProductOracle}$($G, x, y, 1/2, \xi, \mathcal{D}$) is $(\frac{k}{\xi})^{O(1)}\cdot n^{1/2+O(\varepsilon/\varphi^2)}\cdot (\log n)^2 \cdot \frac{1}{\varphi^2}$.
	\end{lemma}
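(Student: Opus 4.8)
The plan is to connect the spectral embedding $f_x = U_{[k]}^T\mathbbm{1}_x$ to quantities that are estimable in sublinear time by random walks, so I would first record the spectral structure of clusterable graphs. By a higher-order Cheeger-type argument (as in \cite{czumaj2015testing,peng2020robust}), a $(k,\varphi,\varepsilon)$-clusterable graph has $\lambda_k \le 2\varepsilon$ and $\lambda_{k+1} = \Omega(\varphi^2)$. Writing $M := I - L/2$ for the lazy random-walk operator, which is symmetric since $G$ is $d$-regular and has eigenvalues $1-\lambda_i/2$, and setting $t := \Theta(\varphi^{-2}\log n)$ and $\Sigma := \mathrm{diag}(1-\lambda_1/2,\dots,1-\lambda_k/2)$, one gets
\[
\big\| M^{2t} - U_{[k]}\,\Sigma^{2t}\,U_{[k]}^T \big\| \le n^{-\Omega(1)},
\]
because the tail eigenvalues are raised to a high enough power. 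The point that drives everything else is that $\Sigma^{2t}$ is well-conditioned only up to a factor $n^{O(\varepsilon/\varphi^2)}$: each diagonal entry lies in $[n^{-O(\varepsilon/\varphi^2)},1]$ since $0\le \lambda_i \le 2\varepsilon$ for $i\le k$. This factor is exactly what will propagate into the running time and the sketch size.

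Next I would make the near-rank-$k$ operator $M^{2t}$ samplable using a single fixed uniform sample $S\subseteq V$ of size $s = n^{1/2+O(\varepsilon/\varphi^2)}\cdot \poly(k/\xi)$. Two ingredients are needed. First, incoherence: since $\phi_{\mathrm{in}}(C_i)\ge \varphi$, a $t$-step lazy walk started inside a cluster is essentially uniform on that cluster, so the bottom eigenvectors are nearly flat on clusters and $\|f_x\|_2^2 = (U_{[k]}U_{[k]}^T)_{xx} = O(k/n)$ for every $x$; a matrix-Chernoff/Rudelson bound then makes $Y := \sqrt{n/s}\,(U_{[k]})_{S,\cdot}$ a near-isometry, with $\|Y^TY - I_k\|\le 1/10$ say, with probability $1-n^{-100}$. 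Second, estimability: for any vertex $x$, the restricted-and-rescaled walk vector $\sqrt{n/s}\,(M^t\mathbbm{1}_x)_S \approx Y\Sigma^t f_x$ can be estimated by running $n^{O(\varepsilon/\varphi^2)}\cdot\poly(k\log n/\xi)$ lazy random walks of length $t$ from $x$ and recording which land in $S$ (each walk costs $O(t)$ adjacency-list queries), and similarly $\widehat M \approx \tfrac{n}{s}(M^{2t})_{S,S}\approx Y\Sigma^{2t}Y^T$ can be estimated from collisions among walks launched from the vertices of $S$. The total walk budget stays $n^{1/2+O(\varepsilon/\varphi^2)}\poly(k\log n/\xi)$, so building the data structure $\mathcal{D}=(S,\widehat M,\widehat M^{+})$ is sublinear.

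On a query $(x,y)$ one estimates $g_x := \sqrt{n/s}\,(M^t\mathbbm{1}_x)_S$ and $g_y$ by fresh walks and outputs $\langle f_x,f_y\rangle_{\mathrm{apx}} := g_x^{\,T}\widehat M^{+} g_y$. The role of the pseudoinverse is deconvolution: since $g_x\approx Y\Sigma^t f_x$, $\widehat M\approx Y\Sigma^{2t}Y^T$ with $Y$ of full column rank, and $Y^TY\approx I_k$, one has $Y^T\widehat M^{+}Y\approx \Sigma^{-2t}$, hence $g_x^T\widehat M^{+}g_y \approx f_x^T\Sigma^t\Sigma^{-2t}\Sigma^t f_y = f_x^T f_y = \langle f_x,f_y\rangle$. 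What remains is to push the three error sources — the $n^{-\Omega(1)}$ spectral-truncation error, the matrix-concentration error in $Y^TY$ and in $\widehat M$, and the Monte-Carlo error in $g_x,g_y$ — through the bilinear form and the pseudoinverse to reach the additive bound $\xi/n$, together with the running-time and failure-probability bookkeeping.

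The main obstacle I expect is precisely this error propagation against the ill-conditioning: because $\Sigma^{2t}$ has condition number $n^{O(\varepsilon/\varphi^2)}$, $\widehat M^{+}$ amplifies errors by that much, so every estimate feeding into it must be accurate to within roughly $\xi/n$ divided by $n^{O(\varepsilon/\varphi^2)}\poly(k)$, which is exactly what forces $s$ and the number of random walks up to $n^{1/2+O(\varepsilon/\varphi^2)}$; one must then check this is still sublinear, i.e. that the exponent $O(\varepsilon/\varphi^2)$ is below $1/2$ under the hypothesis $\varepsilon\le\varphi^2/10^5$. A secondary subtlety is that $S$ is chosen once and reused for all queries, so the near-isometry of $Y$ and the accuracy of $\widehat M$ must hold with probability $1-n^{-100}$ rather than merely in expectation, which is why a matrix-concentration bound (not a second-moment estimate) is needed in the structural step.
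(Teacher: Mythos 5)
This lemma is not proved in the paper at all: it is imported verbatim as Theorem~2 of \cite{gluch2021spectral}, and Appendix~\ref{sec:pseudo} only reproduces the pseudocode of \textsc{InitializeOracle} and \textsc{SpectralDotProductOracle}. Your proposal is therefore being measured against the cited construction rather than against an argument in this paper, and at the architectural level you have reconstructed that construction correctly: a low-rank approximation of $M^{2t}$ with $t=\Theta(\varphi^{-2}\log n)$, a uniform sample $S$ of size $n^{1/2+O(\varepsilon/\varphi^2)}\poly(k/\xi)$, a collision-probability matrix on $S$ whose rank-$k$ eigendecomposition yields the deconvolution operator (the matrix $\Psi$ in Algorithm~\ref{alg:dot-init}), and a bilinear form in the walk statistics of $x$ and $y$ as the output. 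Your identification of the condition number $n^{O(\varepsilon/\varphi^2)}$ of $\Sigma^{2t}$ as the quantity that drives both the sample size and the exponent in the running time is exactly the right leverage point.

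Two steps of your sketch would not survive as written. First, the uniform incoherence claim $\Vert f_x\Vert_2^2=O(k/n)$ for \emph{every} $x$ is false: the bottom eigenvectors of a $(k,\varphi,\varepsilon)$-clusterable graph are close to cluster indicators only in an averaged $\ell_2$ sense (cf.\ Lemma~\ref{lemma-vb}), and the paper's own Lemma~\ref{lemma-fx} obtains $\Vert f_x\Vert_2^2\le \alpha^{-1}k/n$ only for a $(1-\alpha)$ fraction of vertices, via Markov. A matrix-Chernoff bound giving $\Vert Y^TY-I_k\Vert\le 1/10$ with failure probability $n^{-100}$ requires a uniform bound on the summands, so this step needs repair (a truncation or leverage-score argument, or the median-of-$O(\log n)$-repetitions device visible in the pseudocode); as stated it is a gap. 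Second, your query estimator $g_x:=\sqrt{n/s}\,(M^t\mathbbm{1}_x)_S$ cannot be estimated to the required accuracy with only $n^{O(\varepsilon/\varphi^2)}\cdot\poly(k\log n/\xi)$ walks, since its entries are of order $k/n$. The cited algorithm instead forms $\alpha_x=\widehat{Q}^T\widehat{m}_x$, an inner product of two \emph{independently} estimated walk distributions (a collision statistic), and it is the birthday-paradox variance of this inner product that makes $R_{\rm query}=O(n^{1/2+O(\varepsilon/\varphi^2)}\cdot\poly(k/\xi))$ walks --- note the $n^{1/2}$ --- both necessary and sufficient. Your per-query walk budget is off by a factor of $n^{1/2}$, and without the inner-product structure the propagation of the Monte-Carlo error through $\widehat{M}^{+}$ does not close to $\xi/n$.
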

	
	For the completeness of this paper, we will describe the algorithm $\textsc{InitializeOracle}$($G, 1/2, \xi$) and $\textsc{SpectralDotProductOracle}$($G, x, y, 1/2, \xi, \mathcal{D}$) in Appendix \ref{sec:pseudo}.

	\section{Spectral clustering oracle}\label{sec:oracle}
	
	\subsection{Our techniques}\label{subsec:techniques}
	We begin by outlining the main concepts of the spectral clustering oracle presented in \cite{gluch2021spectral}. Firstly, the authors introduce a sublinear time oracle that provides dot product access to the spectral embedding of graph $G$ by estimating distributions of short random walks originating from vertices in $G$ (as described in Lemma \ref{dot-product}). Subsequently, they demonstrate that (1) the set of points corresponding to the spectral embeddings of all vertices exhibits well-concentrated clustering around the cluster center $\mu_i$ (refer to Definition \ref{defn:cluster-means}), 
	and (2) all the cluster centers are approximately orthogonal to each other. 
	The clustering oracle in \cite{gluch2021spectral} operates as follows: it initially guesses the $k$ cluster centers from a set of $\poly(k/\varepsilon)$ sampled vertices, which requires a time complexity of $2^{\poly(k/\varepsilon)}n^{1/2+O(\varepsilon)}$. Subsequently, it iteratively employs the dot product oracle to estimate $\langle f_x, \mu_i\rangle$. If the value of $\langle f_x, \mu_i\rangle$ is significant, it allows them to infer that vertex $x$ likely belongs to cluster $C_i$.
	
	\begin{figure}[H]
		\centering
		\includegraphics[height=4cm,width=9cm]{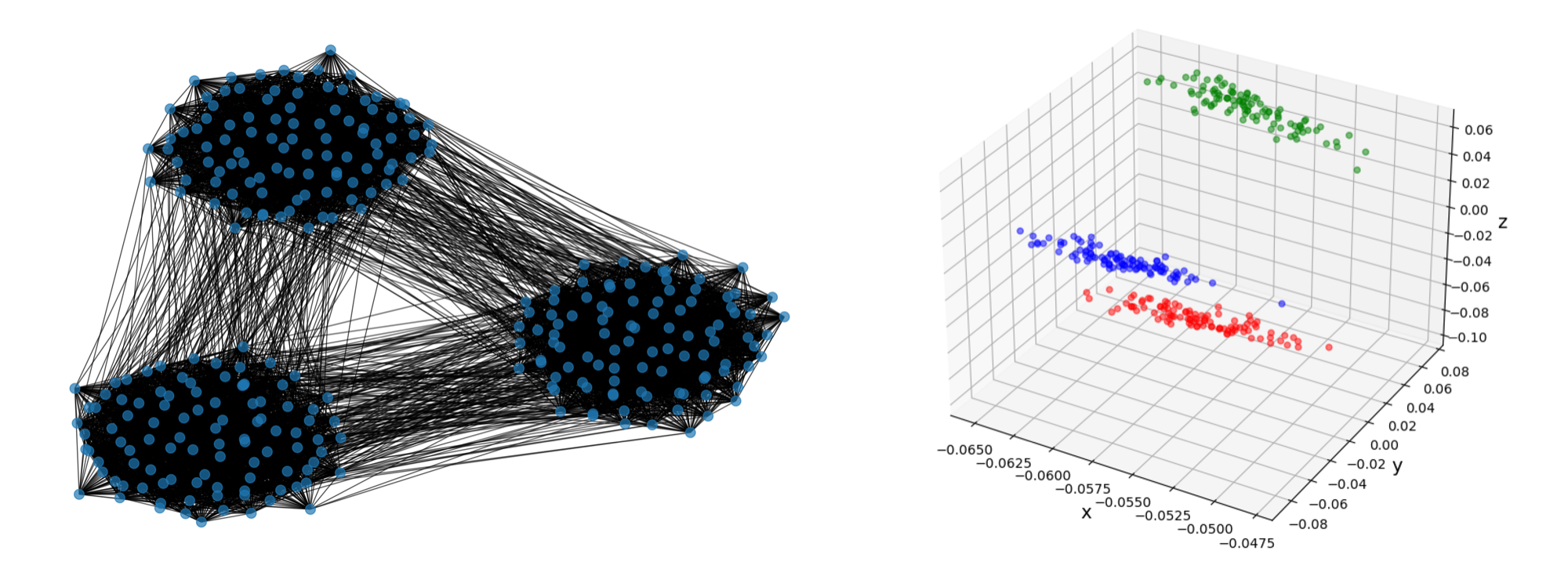}
		\caption{The angle between embeddings of vertices in the same cluster is small and the angle between embeddings of vertices in different clusters is close to orthogonal $(k=3)$.}
		\label{fg1}
	\end{figure}
	
	Now we present our algorithm, which builds upon the dot product oracle in \cite{gluch2021spectral}. Our main insight is to avoid relying directly on cluster centers in our algorithm. By doing so, we can eliminate the need to guess cluster centers and consequently remove the exponential time required in the preprocessing phase described in \cite{gluch2021spectral}. The underlying intuition is as follows: if two vertices, $x$ and $y$, belong to the same cluster $C_i$, their corresponding spectral embeddings $f_x$ and $f_y$ will be close to the cluster center $\mu_i$. As a result, the angle between $f_x$ and $f_y$ will be small, and the dot product $\langle f_x, f_y\rangle$ will be large (roughly on the order of $O(\frac{k}{n})$). Conversely, if $x$ and $y$ belong to different clusters, their embeddings $f_x$ and $f_y$ will tend to be orthogonal, resulting in a small dot product $\langle f_x, f_y\rangle$ (close to 0). We prove that this desirable property holds for the majority of vertices in $d$-regular $(k,\varphi,\varepsilon)$-clusterable graphs (see Figure \ref{fg1} for an illustrative example). Slightly more formally, we introduce the definitions of \emph{good} and \emph{bad} vertices (refer to Definition \ref{defn:good-vertex}) such that the set of good vertices corresponds to the core part of clusters and each pair of good vertices satisfies the aforementioned property; the rest vertices are the bad vertices. Leveraging this property, we can directly utilize the dot product of spectral embeddings to construct a sublinear clustering oracle.

	Based on the desirable property discussed earlier, which holds for $d$-regular $(k,\varphi,\varepsilon)$-clusterable graphs, we can devise a sublinear spectral clustering oracle. Let $G=(V,E)$ be a $d$-regular $(k,\varphi,\varepsilon)$-clusterable graph that possesses a $(k,\varphi,\varepsilon)$-clustering $C_1,\dots,C_k$. In the preprocessing phase, we sample a set $S$ of $s$ vertices from $V$ and construct a similarity graph, denoted as $H$, on $S$. For each pair of vertices $x,y\in S$, we utilize the dot product oracle from \cite{gluch2021spectral} to estimate $\langle f_x, f_y\rangle$. If $x$ and $y$ belong to the same cluster $C_i$, yielding a large $\langle f_x, f_y\rangle$, we add an edge $(x,y)$ to $H$. Conversely, if $x$ and $y$ belong to different clusters, resulting in a $\langle f_x, f_y\rangle$ close to 0, we make no modifications to $H$. Consequently, only vertices within the same cluster $C_i (i\in[k])$ can be connected by edges. We can also establish that, by appropriately selecting $s$, the sampling set $S$ will, with high probability, contain at least one vertex from each $C_1,\dots,C_k$. Thus, the similarity graph $H$ will have $k$ connected components, with each component corresponding to a cluster in the ground-truth. We utilize these $k$ connected components, denoted as $S_1,\dots,S_k$, to represent $C_1,\dots,C_k$.
	
	During the query phase, we determine whether the queried vertex $x$ belongs to a connected component in $H$. Specifically, we estimate $\langle f_x, f_y\rangle$ for all $y\in S$. If there exists a unique index $i\in[k]$ for which $\langle f_x, f_u\rangle$ is significant (approximately $O(\frac{k}{n})$) for all $u\in S_i$, we conclude that $x$ belongs to cluster $C_i$, associated with $S_i$. If no such unique index is found, we assign $x$ a random index $i$, where $i\in[k]$.

	\subsection{The clustering oracle}
	Next, we present our algorithms for constructing a spectral clustering oracle and handling the \textsc{WhichCluster} queries. In the preprocessing phase, the algorithm \textsc{ConstructOracle}($G, k, \varphi, \varepsilon, \gamma$) learns the cluster structure representation of $G$. This involves constructing a similarity graph $H$ on a sampled vertex set $S$ and assigning membership labels $\ell$ to all vertices in $S$. During the query phase, the algorithm \textsc{WhichCluster}($G,x$) determines the clustering membership index to which vertex $x$ belongs. More specifically, \textsc{WhichCluster}($G,x$) utilizes the function \textsc{SearchIndex}($H,\ell,x$) to check whether the queried vertex $x$ belongs to a unique connected component in $H$. If it does, \textsc{SearchIndex}($H,\ell,x$) will return the index of the unique connected component in $H$.
	
	The algorithm in preprocessing phase is given in Algorithm \ref{alg:construct} \textsc{ConstructOracle}($G, k, \varphi, \varepsilon, \gamma$).
	
	\begin{algorithm}[t]
		\DontPrintSemicolon
		\caption{\textsc{ConstructOracle}($G, k, \varphi, \varepsilon, \gamma$)}
		\label{alg:construct}
		
		Let $\xi=\frac{\sqrt{\gamma} }{1000}$ and let $s=\frac{10\cdot k\log k}{\gamma}$\;
		Let $\theta = 0.96(1-\frac{4\sqrt{\varepsilon}}{\varphi})\frac{\gamma k}{n}-\frac{\sqrt{k}}{n}(\frac{\varepsilon}{\varphi^2})^{1/6}-\frac{\xi}{n}$\;
		Sample a set S of $s$ vertices independently and uniformly at random from $V$\;
		Generate a similarity graph $H=(S, \emptyset)$\;
		Let $\mathcal{D}=$ \textsc{InitializeOracle}($G, 1/2, \xi$)\;
		
		\For{any $u, v\in S$}{
			Let $\langle f_u,f_v\rangle_{\rm apx}=$ \textsc{SpectralDotProductOracle}($G, u, v, 1/2, \xi, \mathcal{D}$)\;
			\If{$\langle f_u,f_v\rangle_{\rm apx}\ge \theta$}{
				Add an edge $(u, v)$ to the similarity graph $H$\;
			}
		}
		\eIf{$H$ has exactly $k$ connected components}{
			Label the connected components with $1,2,\dots,k$ (we write them as $S_1,\dots,S_k$)\;
			Label $x\in S$ with $i$ if $x\in S_i$\;
			Return $H$ and the vertex labeling $\ell$
		}{
			return $\textbf{fail}$
		}
	\end{algorithm}
	
	See Appendix \ref{sec:pseudo} for algorithm \textsc{InitializeOracle} and \textsc{SpectralDotProductOracle} invoked by \textsc{ConstructOracle}($G, k, \varphi, \varepsilon,\gamma$).
	
	Our algorithms used in the query phase are described in Algorithm \ref{alg:search} \textsc{SearchIndex}($H,\ell,x$) and Algorithm \ref{alg:which} \textsc{WhichCluster}($G,x$). 
	
	\begin{algorithm}[h]
		\DontPrintSemicolon
		\caption{\textsc{SearchIndex}($H,\ell,x$)}
		\label{alg:search}
		
		\For{any vertex $u\in S$}{
			Let $\langle f_u,f_x\rangle_{\rm apx}=$ \textsc{SpectralDotProductOracle}($G, u, x, 1/2, \xi, \mathcal{D}$)\;
		}
		\eIf{there exists a unique index $1\le i\le k$ such that $\langle f_u,f_x\rangle_{\rm apx}\ge \theta$ for all $u\in S_i$}{
			return index $i$
		}
		{
			return $\textbf{outlier}$
		}
	\end{algorithm}
	
	\begin{algorithm}[h]
		\DontPrintSemicolon
		\caption{\textsc{WhichCluster}($G,x$)}
		\label{alg:which}
		
		\If{preprocessing phase $\textbf{fails}$}{
			return $\textbf{fail}$
		}
		\eIf{\textsc{SearchIndex}($H,\ell,x$) return $\textbf{outlier}$}{
			return a random index$\in[k]$
		}
		{
			return \textsc{SearchIndex}($H,\ell,x$)
		}
	\end{algorithm}

	\section{Analysis of the oracle}\label{sec:analysis}
	
	\subsection{Key properties}
	We now prove the following property: for most vertex pairs $x,y\in V$, if $x,y$ are in the same cluster, then $\langle f_x,f_y\rangle$ is rough $O(\frac{k}{n})$ (Lemma \ref{lemma-big}); and if $x,y$ are in the different clusters, then $\langle f_x,f_y\rangle$ is close to $0$ (Lemma \ref{lemma-small}). To prove the two key properties, we make use of the following three lemmas (Lemma \ref{lemma-fx}, Lemma \ref{lemma-fx-mu} and Lemma \ref{lemma-fx-times-mu}). 
	
	The following lemma shows that for most vertices $x$, the norm $\Vert f_x \Vert_2$ is small. 
	\begin{lemma} 
		\label{lemma-fx}
		Let $\alpha\in (0,1)$. Let $k\ge 2$ be an integer, $\varphi\in (0,1)$,  and $\varepsilon\in(0,1)$. Let $G=(V,E)$ be a $d$-regular $(k,\varphi,\varepsilon)$-clusterable graph with $|V|=n$. There exists a subset $\widehat{V}\subseteq V$ with $|\widehat{V}|\ge (1-\alpha)|V|$ such that for all $x\in \widehat{V}$, it holds that $\Vert f_x \Vert_2 \le \sqrt{\frac{1}{\alpha}\cdot \frac{k}{n}}$.
		\begin{proof}
			Recall that $u_1, \dots, u_k$ are an orthonormal basis of eigenvectors of $L$, so $\Vert u_i \Vert_2^2=1$ for all $i\in [k]$. So $\sum_{i=1}^k \Vert u_i \Vert_2^2=\sum_{i=1}^n \Vert f_{x_i} \Vert_2^2=k.$
			Let $X$ be a random variable such that $X=\Vert f_{x_i} \Vert_2^2$ with probability $\frac{1}{n}$, for each $i \in [n]$. Then we have $\E[X]=\frac{1}{n}\sum_{i=1}^n\Vert f_{x_i}\Vert_2^2=\frac{k}{n}.$
			Using Markov's inequality, we have $\Pr[X\ge \frac{1}{\alpha}\cdot \E[X]]=\Pr[X\ge \frac{1}{\alpha}\cdot \frac{k}{n}]\le \alpha.$
			This gives us that $\Pr[X\le \frac{1}{\alpha}\cdot \frac{k}{n}]\ge 1-\alpha$, which means that at least $(1-\alpha)$ fraction of vertices in V satisfies $\Vert f_{x}\Vert_2^2\le\frac{1}{\alpha}\cdot \frac{k}{n}$. We define $\widehat{V}:=\{x\in V: \Vert f_x\Vert_2^2\le \frac{1}{\alpha}\cdot \frac{k}{n}\}$, then we have $|\widehat{V}|\ge (1-\alpha)|V|$. This ends the proof. 
		\end{proof}
	\end{lemma}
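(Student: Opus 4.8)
The plan is to prove this by a short averaging (Markov) argument; notably the $(k,\varphi,\varepsilon)$-clusterability hypothesis is not needed, so I would run the proof purely in terms of the orthonormal eigenbasis fixed in Section~\ref{sec:preliminaries}. The first step is to record the identity $\sum_{x\in V}\Vert f_x\Vert_2^2 = k$: by Definition~\ref{defn:embedding}, $f_x$ is the $x$-th row of $U_{[k]}\in\mathbb{R}^{n\times k}$, so $\sum_{x\in V}\Vert f_x\Vert_2^2$ is the sum of squared row norms of $U_{[k]}$, i.e.\ $\Vert U_{[k]}\Vert_F^2$, which equals the sum of its squared column norms $\sum_{i=1}^k\Vert u_i\Vert_2^2$; since $u_1,\dots,u_k$ are orthonormal, each term equals $1$ and the sum is $k$.

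The second step converts this into a bound on the number of ``heavy'' vertices. Since the average of $\Vert f_x\Vert_2^2$ over $x\in V$ equals $k/n$, Markov's inequality (equivalently: if more than $\alpha n$ vertices satisfied $\Vert f_x\Vert_2^2 > \frac{1}{\alpha}\cdot\frac{k}{n}$, the total would exceed $\alpha n\cdot\frac{k}{\alpha n}=k$, contradicting the identity) shows that at most an $\alpha$ fraction of $V$ has $\Vert f_x\Vert_2^2 > \frac{1}{\alpha}\cdot\frac{k}{n}$. Setting $\widehat{V}:=\{x\in V: \Vert f_x\Vert_2^2\le\frac{1}{\alpha}\cdot\frac{k}{n}\}$ then gives $|\widehat{V}|\ge(1-\alpha)|V|$, and $\Vert f_x\Vert_2\le\sqrt{\frac{1}{\alpha}\cdot\frac{k}{n}}$ for every $x\in\widehat{V}$, which is the claim.

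I do not expect any real obstacle here: the only point deserving care is the identity $\sum_{x\in V}\Vert f_x\Vert_2^2=k$, which relies on working with the \emph{bottom $k$} eigenvectors and on the choice (made in the preliminaries) that $u_1,\dots,u_n$ form an orthonormal basis; the rest is just Markov's inequality. The more substantive estimates — bounding $\Vert f_x-\mu_i\Vert_2$ and $\langle f_x,\mu_i\rangle$, where the conductance parameters $\varphi,\varepsilon$ genuinely enter — are what I would expect to handle in the following lemmas (Lemma~\ref{lemma-fx-mu} and Lemma~\ref{lemma-fx-times-mu}).
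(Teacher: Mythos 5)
Your proposal is correct and follows essentially the same argument as the paper: the identity $\sum_{x\in V}\Vert f_x\Vert_2^2=\sum_{i=1}^k\Vert u_i\Vert_2^2=k$ from orthonormality of the bottom $k$ eigenvectors, followed by Markov's inequality applied to the average $k/n$. Your side remark that the $(k,\varphi,\varepsilon)$-clusterability hypothesis is not actually used is accurate but does not change the substance.
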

	
	We then show that for most vertices $x$, $f_x$ is close to its center  $\mu_x$ of the cluster containing $x$. 
	\begin{lemma}
		\label{lemma-fx-mu}
		Let $\beta\in(0,1)$. Let $k\ge 2$ be an integer, $\varphi\in (0,1)$,  and $\varepsilon\in(0,1)$. Let $G=(V,E)$ be a $d$-regular graph that admits a $(k, \varphi, \varepsilon)$-clustering $C_1,\dots,C_k$ with $|V|=n$. There exists a subset $\widetilde{V}\subseteq V$ with $|\widetilde{V}|\ge \left(1-\beta\right)|V|$ such that for all $x\in \widetilde{V}$, it holds that $\Vert f_x-\mu_x\Vert_2\le \sqrt{\frac{4k\varepsilon}{\beta \varphi^2}\cdot \frac{1}{n}}$.
	\end{lemma}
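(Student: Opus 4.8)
The plan is to bound the total squared deviation $\sum_{i=1}^k \sum_{x \in C_i} \Vert f_x - \mu_i \Vert_2^2$ by a quantity of order $\frac{k\varepsilon}{\varphi^2}$, and then apply Markov's inequality exactly as in the proof of Lemma~\ref{lemma-fx}. Concretely, I would first show that $\sum_{x \in C_i} \Vert f_x - \mu_i \Vert_2^2 = \sum_{x \in C_i} \Vert f_x \Vert_2^2 - |C_i| \Vert \mu_i \Vert_2^2$ (the standard bias--variance split, using that $\mu_i$ is the centroid of $\{f_x : x \in C_i\}$). Summing over $i$, the first term is $\sum_{x \in V} \Vert f_x \Vert_2^2 = k$ (as already computed in Lemma~\ref{lemma-fx}, since the $u_1,\dots,u_k$ are orthonormal), so it suffices to lower-bound $\sum_{i=1}^k |C_i| \Vert \mu_i \Vert_2^2$ by $k - O(k\varepsilon/\varphi^2)$.

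The key input is the higher-order Cheeger / spectral gap fact for clusterable graphs: since $G$ admits a $(k,\varphi,\varepsilon)$-clustering, the $k$-th smallest eigenvalue $\lambda_k$ of the normalized Laplacian $L$ is $O(\varepsilon)$ (each $\phi_{\mathrm{out}}(C_i,V) \le \varepsilon$ gives $k$ nearly-orthogonal test vectors of Rayleigh quotient $\le \varepsilon$, hence $\lambda_k \le 2\varepsilon$ by the min--max characterization), while $\lambda_{k+1} = \Omega(\varphi^2/k)$ by a Cheeger-type argument (this is the standard gap assumption behind the whole line of work; I would cite the appropriate structural lemma from \cite{czumaj2015testing} or \cite{peng2020robust,gluch2021spectral}). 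Writing $P = U_{[k]} U_{[k]}^T$ for the projection onto the bottom-$k$ eigenspace, one has $f_x = $ the coordinates of $P\mathbbm{1}_x$ in the $U_{[k]}$ basis, so $\Vert f_x \Vert_2 = \Vert P\mathbbm{1}_x \Vert_2$ and $\langle f_x, f_y\rangle = \langle P\mathbbm{1}_x, P\mathbbm{1}_y\rangle$. The quantity $\sum_i |C_i|\Vert\mu_i\Vert_2^2 = \sum_i \frac{1}{|C_i|}\Vert \sum_{x\in C_i} P\mathbbm{1}_x \Vert_2^2 = \sum_i \frac{1}{|C_i|}\Vert P\mathbbm{1}_{C_i}\Vert_2^2$ where $\mathbbm{1}_{C_i}$ is the indicator vector of $C_i$. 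So I must show each normalized indicator $\bar{\mathbbm{1}}_{C_i} := \mathbbm{1}_{C_i}/\sqrt{|C_i|}$ is almost entirely inside the bottom-$k$ eigenspace: $\Vert P\bar{\mathbbm{1}}_{C_i}\Vert_2^2 \ge 1 - O(\varepsilon/\varphi^2)$.

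That last estimate is the main obstacle, and it follows from the spectral gap: the Rayleigh quotient $\bar{\mathbbm{1}}_{C_i}^T L \,\bar{\mathbbm{1}}_{C_i} = \frac{|E(C_i, V\setminus C_i)|}{d|C_i|} = \phi_{\mathrm{out}}(C_i,V) \le \varepsilon$ (using $d$-regularity and the self-loop convention). Decomposing $\bar{\mathbbm{1}}_{C_i}$ into its component in $\mathrm{span}(u_1,\dots,u_k)$ and its orthogonal complement, the energy contributed by the orthogonal part is at least $\lambda_{k+1}\Vert (I-P)\bar{\mathbbm{1}}_{C_i}\Vert_2^2 \ge \Omega(\varphi^2/k)\cdot \Vert(I-P)\bar{\mathbbm{1}}_{C_i}\Vert_2^2$, and this is at most $\varepsilon$, giving $\Vert(I-P)\bar{\mathbbm{1}}_{C_i}\Vert_2^2 \le O(k\varepsilon/\varphi^2)$ — wait, I should be careful with the $k$ factors here and instead aim directly for the clean bound $\sum_i \Vert(I-P)\bar{\mathbbm{1}}_{C_i}\Vert_2^2 \le 4k\varepsilon/\varphi^2$ that the lemma's constant suggests; the cleanest route is to use the structural lemma of \cite{peng2020robust}/\cite{gluch2021spectral} stating precisely $\sum_{i=1}^k \Vert f_x \text{-deviations}\Vert$ or equivalently $\Vert U_{[k]}^T \mathbbm{1}_{C_i}/\sqrt{|C_i|} - (\text{unit vector})\Vert$ is small, rather than re-deriving the higher-order Cheeger inequality from scratch. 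Once $\sum_i |C_i|\Vert\mu_i\Vert_2^2 \ge k - \frac{4k\varepsilon}{\varphi^2}$ is in hand, we get $\sum_{x\in V}\Vert f_x - \mu_x\Vert_2^2 \le \frac{4k\varepsilon}{\varphi^2}$, and then a random vertex $x$ has $\E[\Vert f_x - \mu_x\Vert_2^2] \le \frac{4k\varepsilon}{\varphi^2 n}$; Markov's inequality at level $1/\beta$ yields a set $\widetilde{V}$ of size $\ge (1-\beta)n$ on which $\Vert f_x - \mu_x\Vert_2 \le \sqrt{\frac{4k\varepsilon}{\beta\varphi^2}\cdot\frac1n}$, which is exactly the claim.
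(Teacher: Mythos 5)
Your proof, as you finally commit to it, is essentially the paper's: the paper invokes exactly the structural bound you defer to (Lemma~\ref{lemma-vb}, i.e.\ Lemma~6 of \cite{gluch2021spectral}), sums it over an orthonormal basis of $\mathbb{R}^k$ to get $\sum_{x\in V}\Vert f_x-\mu_x\Vert_2^2\le \frac{4k\varepsilon}{\varphi^2}$, and then applies the same Markov/counting step you describe. One remark on your exploratory self-contained derivation: the bias--variance split and the identity $\sum_i|C_i|\Vert\mu_i\Vert_2^2=\sum_i\Vert P\bar{\mathbbm{1}}_{C_i}\Vert_2^2$ are fine, but with $\lambda_{k+1}=\Omega(\varphi^2/k)$ you would only get $\sum_i\Vert(I-P)\bar{\mathbbm{1}}_{C_i}\Vert_2^2\le O(k^2\varepsilon/\varphi^2)$, losing a factor of $k$; recovering the stated constant along that route requires a $k$-independent bound such as $\lambda_{k+1}\ge\varphi^2/4$ for graphs partitionable into $k$ sets of inner conductance $\ge\varphi$, which you correctly sensed and sidestepped by citing the structural lemma instead.
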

	
	The following result will be used in our proof:
	\begin{lemma}[Lemma 6, \cite{gluch2021spectral}]
		\label{lemma-vb}
		Let $k\ge 2$ be an integer, $\varphi\in (0,1)$,  and $\varepsilon\in(0,1)$. Let $G=(V,E)$ be a $d$-regular graph that admits a $(k, \varphi, \varepsilon)$-clustering $C_1,\dots,C_k$. Then for all $\alpha\in \mathbb{R}^k$, with $\Vert\alpha\Vert_2=1$ we have
		\begin{linenomath*}
			\[
			\sum_{i=1}^k{\sum_{x\in C_i}{\langle f_x-\mu_i, \alpha\rangle^2\le \frac{4\varepsilon}{\varphi^2}}}.
			\]
		\end{linenomath*}
	\end{lemma}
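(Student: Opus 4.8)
The plan is to fix an arbitrary unit vector $\alpha\in\mathbb{R}^k$ and reduce the claim to a statement about the single vector $g:=U_{[k]}\alpha\in\mathbb{R}^n$. Since the columns of $U_{[k]}$ are orthonormal, $\|g\|_2=\|\alpha\|_2=1$; and because $f_x$ is the $x$-row of $U_{[k]}$ (Definition~\ref{defn:embedding}), $g(x)=\langle f_x,\alpha\rangle$ for every $x\in V$, so that $\langle\mu_i,\alpha\rangle=\frac{1}{|C_i|}\sum_{x\in C_i}g(x)=:\bar g_i$. Hence
\begin{linenomath*}
\[
\sum_{i=1}^k\sum_{x\in C_i}\langle f_x-\mu_i,\alpha\rangle^2=\sum_{i=1}^k\sum_{x\in C_i}\bigl(g(x)-\bar g_i\bigr)^2 ,
\]
\end{linenomath*}
i.e.\ the quantity to be bounded is exactly the total within-cluster variance of $g$, and it suffices to show that every unit vector $g\in\operatorname{span}\{u_1,\dots,u_k\}$ has within-cluster variance at most $4\varepsilon/\varphi^2$.

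Two ingredients drive the bound. First, $g$ has small Dirichlet energy: writing $g=\sum_{j=1}^k\alpha_j u_j$ gives $g^T L g=\sum_{j=1}^k\alpha_j^2\lambda_j\le\lambda_k$, and I would show $\lambda_k\le 2\varepsilon$ by testing the Courant--Fischer characterization on the $k$-dimensional subspace $\mathcal W=\operatorname{span}\{\mathbbm{1}_{C_1},\dots,\mathbbm{1}_{C_k}\}$: a vector $z\in\mathcal W$ is constant on each $C_i$, so only the $d|C_i|\,\phi_{\mathrm{out}}(C_i,V)\le d\varepsilon|C_i|$ edges leaving $C_i$ contribute to $z^T L z=\frac1d\sum_{(u,v)\in E}(z(u)-z(v))^2$, and $(z(u)-z(v))^2\le 2z(u)^2+2z(v)^2$ then yields $z^T L z\le 2\varepsilon\|z\|_2^2$. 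Second, a within-cluster Poincar\'e inequality: for each $i$,
\begin{linenomath*}
\[
\sum_{x\in C_i}\bigl(g(x)-\bar g_i\bigr)^2\;\le\;\frac{2}{\varphi^2}\cdot\frac1d\sum_{\{u,v\}\in E(G[C_i])}\bigl(g(u)-g(v)\bigr)^2 .
\]
\end{linenomath*}
To prove it, complete $G[C_i]$ to a $d$-regular graph $H_i$ by adding self-loops (as in Section~\ref{sec:preliminaries}); self-loops leave every difference $g(u)-g(v)$ unchanged, the conductance of the $d$-regular $H_i$ equals $\phi_{\mathrm{in}}(C_i)\ge\varphi$, and hence by Cheeger's inequality the normalized Laplacian of $H_i$ has second eigenvalue at least $\varphi^2/2$; since $g|_{C_i}-\bar g_i\mathbbm{1}$ is orthogonal to the constant eigenvector, its Rayleigh quotient with respect to that Laplacian is at least $\varphi^2/2$, which is exactly the displayed inequality once the quadratic form is rewritten as $\frac1d\sum_{\{u,v\}\in E(G[C_i])}(g(u)-g(v))^2$.

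Combining the two is then immediate: summing the Poincar\'e inequality over $i\in[k]$, the right-hand side is $\frac{2}{\varphi^2}\cdot\frac1d\sum_i\sum_{\{u,v\}\in E(G[C_i])}(g(u)-g(v))^2\le\frac{2}{\varphi^2}\cdot\frac1d\sum_{(u,v)\in E}(g(u)-g(v))^2=\frac{2}{\varphi^2}\,g^T L g$, because each inner sum ranges over a subset of $E$; with $g^T L g\le\lambda_k\le 2\varepsilon$ this gives within-cluster variance at most $\frac{2}{\varphi^2}\cdot 2\varepsilon=\frac{4\varepsilon}{\varphi^2}$. Singleton clusters contribute $0$ and are harmless.

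The step requiring the most care is the within-cluster Poincar\'e inequality: one has to exploit that the paper's inner conductance $\phi_{\mathrm{in}}(C_i)$ is normalized by the \emph{ambient} degree $d$ rather than by the degrees inside $G[C_i]$, which is precisely what legitimizes the self-loop completion and pins the Poincar\'e constant at $2/\varphi^2$, so that the final bound comes out as $4\varepsilon/\varphi^2$ with no stray $\poly(k)$ or degree factors. Notably, the argument never uses a lower bound on $\lambda_{k+1}$, so no higher-order Cheeger inequality is needed --- only the elementary estimate $\lambda_k\le 2\varepsilon$ together with Cheeger's inequality applied separately inside each cluster.
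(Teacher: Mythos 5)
The paper does not prove this lemma --- it is imported verbatim as Lemma 6 from \cite{gluch2021spectral} and used as a black box --- so there is no in-paper proof to compare against. Your argument is correct and is the standard spectral route to this bound: you rewrite $\sum_i\sum_{x\in C_i}\langle f_x-\mu_i,\alpha\rangle^2$ as the total within-cluster variance of the unit vector $g=U_{[k]}\alpha$, bound its Dirichlet energy by $\lambda_k\le 2\varepsilon$ via the test subspace spanned by cluster indicators (the elementary inequality $(z(u)-z(v))^2\le 2z(u)^2+2z(v)^2$ together with $|E(C_i,V\setminus C_i)|\le d\varepsilon|C_i|$ gives $z^TLz\le 2\varepsilon\|z\|^2$ on that subspace), and then control each cluster's variance by a Poincar\'e inequality obtained from Cheeger applied to the $d$-regularized induced subgraph. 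The one place that deserves the care you gave it is the self-loop completion: because $\phi_{\rm in}(C_i)$ in Definition~\ref{defn:conductance} is normalized by the ambient degree $d$, the $d$-regularized graph $H_i$ has $\phi(H_i)=\phi_{\rm in}(C_i)\ge\varphi$ while its normalized-Laplacian quadratic form is exactly $\frac1d\sum_{\{u,v\}\in E(G[C_i])}(g(u)-g(v))^2$ (self-loops contribute nothing), so Cheeger gives $\lambda_2(L_{H_i})\ge\varphi^2/2$ and the Poincar\'e constant $2/\varphi^2$ is clean; note also that $\phi_{\rm in}(C_i)\ge\varphi>0$ forces $G[C_i]$ to be connected, so $\lambda_2(L_{H_i})>0$ and the application of Cheeger and the Rayleigh-quotient step are legitimate. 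Summing and using that intra-cluster edges are a subset of $E$ gives the stated $4\varepsilon/\varphi^2$. I see no gap.
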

	
	\begin{proof}[Proof of Lemma \ref{lemma-fx-mu}]
		By summing over $\alpha$ in an orthonormal basis of $\mathbb{R}^k$, we can get
		\begin{linenomath*}
			\[
			\sum_{x\in V}{\Vert f_x-\mu_x\Vert_2^2\le k\cdot \frac{4\varepsilon}{\varphi^2}}=\frac{4k\varepsilon}{\varphi^2},
			\]
		\end{linenomath*}
		where $\mu_x$ is the cluster center of the cluster that $x$ belongs to. Define $V^*=\{x\in V: \Vert f_x-\mu_x\Vert_2^2\ge \frac{4k\varepsilon}{\beta \varphi^2}\cdot \frac{1}{n} \}$.
		Then,
		\begin{linenomath*}
			\begin{equation*}
				\begin{split}
					\frac{4k\varepsilon}{\varphi^2}&\ge\sum_{x\in V}{\Vert f_x-\mu_x\Vert_2^2}\ge \sum_{x\in V^*}{\Vert f_x-\mu_x\Vert_2^2}\ge \sum_{x\in V^*}{\frac{4k\varepsilon}{\beta \varphi^2}\cdot \frac{1}{n}}=|V^*|\cdot \frac{4k\varepsilon}{\beta \varphi^2}\cdot \frac{1}{n}.
				\end{split}
			\end{equation*}
		\end{linenomath*}
		So, we can get $|V^*|\le \beta n$.
		We define $\widetilde{V}=V\backslash V^*=\{x\in V: \Vert f_x-\mu_x\Vert_2^2\le \frac{4k\varepsilon}{\beta \varphi^2}\cdot \frac{1}{n}\}$. Therefore, we have $|\widetilde{V}|\ge (1-\beta)n=(1-\beta)|V|$. This ends the proof.
	\end{proof}
	
	The next lemma shows that for most vertices $x$ in a cluster $C_i$, the inner product $\langle f_x,\mu_i\rangle$ is large. 
	\begin{lemma}
		\label{lemma-fx-times-mu}
		Let $k\ge 2$ be an integer, $\varphi\in (0,1)$, and $\frac{\varepsilon}{\varphi^2}$ be smaller than a sufficiently small constant. Let $G=(V, E)$ be a $d$-regular graph that admits a $(k,\varphi,\varepsilon)$-clustering $C_1,\dots,C_k$. Let $C_i$ denote the cluster corresponding to the center $\mu_i$, $i\in[k]$. Then for every $C_i$, $i\in[k]$, there exists a subset $\widetilde{C_i}\subseteq C_i$ with $|\widetilde{C_i}|\ge ( 1-\frac{10^4\varepsilon}{\varphi^2})|C_i|$ such that for all $x\in \widetilde{C_i}$, it holds that $\langle f_x, \mu_i\rangle \ge 0.96\Vert\mu_i\Vert_2^2.$
	\end{lemma}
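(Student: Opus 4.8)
The plan is to write, for $x \in C_i$,
\[
\langle f_x, \mu_i\rangle = \|\mu_i\|_2^2 + \langle f_x - \mu_i, \mu_i\rangle,
\]
so that the quantity of interest is $\|\mu_i\|_2^2$ plus a fluctuation term whose average over $x\in C_i$ is exactly $0$ (because $\sum_{x\in C_i}(f_x-\mu_i)=0$ by the definition of $\mu_i$). Applying Lemma \ref{lemma-vb} with the unit vector $\alpha=\mu_i/\|\mu_i\|_2$ gives $\sum_{x\in C_i}\langle f_x-\mu_i,\mu_i\rangle^2\le \frac{4\varepsilon}{\varphi^2}\|\mu_i\|_2^2$. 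By Markov's inequality the number of $x\in C_i$ with $\langle f_x-\mu_i,\mu_i\rangle<-0.04\|\mu_i\|_2^2$ (equivalently $\langle f_x-\mu_i,\mu_i\rangle^2>(0.04)^2\|\mu_i\|_2^4$) is at most $\frac{4\varepsilon\|\mu_i\|_2^2/\varphi^2}{(0.04)^2\|\mu_i\|_2^4}=\frac{2500\,\varepsilon}{\varphi^2\|\mu_i\|_2^2}$, and for every $x\in C_i$ outside this exceptional set one has $\langle f_x,\mu_i\rangle\ge 0.96\|\mu_i\|_2^2$. Thus it only remains to convert the count $\frac{2500\varepsilon}{\varphi^2\|\mu_i\|_2^2}$ into a bound of the form $O(\varepsilon/\varphi^2)\cdot|C_i|$, i.e. to prove a lower bound $\|\mu_i\|_2^2=\Omega(1/|C_i|)$; concretely $\|\mu_i\|_2^2\ge\frac{1}{4|C_i|}$ yields the constant $10^4$.

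For the lower bound on $\|\mu_i\|_2^2$ I would proceed as follows. Since $\mu_i=\frac{1}{|C_i|}U_{[k]}^{T}\mathbbm{1}_{C_i}$ and $U_{[k]}$ has orthonormal columns, the projection $U_{[k]}U_{[k]}^{T}$ is norm-nonincreasing, so $\|\mu_i\|_2^2=\frac{1}{|C_i|^2}\|U_{[k]}^{T}\mathbbm{1}_{C_i}\|_2^2\le\frac{1}{|C_i|^2}\|\mathbbm{1}_{C_i}\|_2^2=\frac{1}{|C_i|}$, i.e. $|C_i|\|\mu_i\|_2^2\le 1$ for every $i$. For a matching lower bound on the \emph{sum} over clusters, expand $\|f_x\|_2^2=\|\mu_i\|_2^2+2\langle f_x-\mu_i,\mu_i\rangle+\|f_x-\mu_i\|_2^2$ for $x\in C_i$ and sum over $x\in C_i$; the cross term vanishes, giving $\sum_{x\in C_i}\|f_x\|_2^2=|C_i|\|\mu_i\|_2^2+\sum_{x\in C_i}\|f_x-\mu_i\|_2^2$. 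Summing over $i$ and using $\sum_{x\in V}\|f_x\|_2^2=\sum_{j=1}^{k}\|u_j\|_2^2=k$ together with $\sum_{i}\sum_{x\in C_i}\|f_x-\mu_i\|_2^2\le\frac{4k\varepsilon}{\varphi^2}$ (obtained by summing Lemma \ref{lemma-vb} over an orthonormal basis of $\mathbb{R}^k$, exactly as in the proof of Lemma \ref{lemma-fx-mu}) yields $\sum_{i=1}^{k}|C_i|\|\mu_i\|_2^2\ge k\big(1-\tfrac{4\varepsilon}{\varphi^2}\big)$. Combining with the per-cluster upper bound $|C_j|\|\mu_j\|_2^2\le 1$ for $j\neq i$ gives, for each fixed $i$, $|C_i|\|\mu_i\|_2^2\ge k(1-\tfrac{4\varepsilon}{\varphi^2})-(k-1)=1-\tfrac{4k\varepsilon}{\varphi^2}\ge\tfrac34$ once $\varepsilon/\varphi^2$ is small enough, hence $\|\mu_i\|_2^2\ge\frac{3}{4|C_i|}\ge\frac{1}{4|C_i|}$. (One can avoid the factor $k$ in the smallness requirement by instead controlling the mass of $\mathbbm{1}_{C_i}$ outside $\mathrm{span}(u_1,\dots,u_k)$ via the Rayleigh quotient $y_i^{T}Ly_i=\phi_{\mathrm{out}}(C_i,V)\le\varepsilon$ with $y_i=\mathbbm{1}_{C_i}/\sqrt{|C_i|}$ and the spectral-gap estimate $\lambda_{k+1}=\Omega(\varphi^2)$ for $(k,\varphi,\varepsilon)$-clusterable graphs, which gives $\|\mu_i\|_2^2\ge\frac{1-O(\varepsilon/\varphi^2)}{|C_i|}$.)

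Putting the pieces together: with $\|\mu_i\|_2^2\ge\frac{1}{4|C_i|}$ the exceptional set has size at most $\frac{2500\varepsilon}{\varphi^2}\cdot 4|C_i|=\frac{10^4\varepsilon}{\varphi^2}|C_i|$, so taking $\widetilde{C_i}$ to be its complement in $C_i$ gives $|\widetilde{C_i}|\ge(1-\frac{10^4\varepsilon}{\varphi^2})|C_i|$ and $\langle f_x,\mu_i\rangle\ge 0.96\|\mu_i\|_2^2$ for all $x\in\widetilde{C_i}$, as claimed. The only genuinely non-routine ingredient is the lower bound $\|\mu_i\|_2^2=\Omega(1/|C_i|)$ — everything else is a one-line Markov/Cauchy–Schwarz computation — and the cleanest self-contained route is the global counting identity above; the one subtlety is that it wants $\varepsilon/\varphi^2$ small relative to $1/k$ rather than merely a universal constant (which is in any case guaranteed in the parameter regime of Theorem \ref{thm:main-result}), a restriction that the spectral-gap alternative removes.
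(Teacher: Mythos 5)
Your proof is correct, but it takes a genuinely different route from the paper. The paper's proof is a two-line specialization: it invokes Lemma \ref{lemma5} (Lemma 31 of \cite{gluch2021spectral}) with $S=\emptyset$, so that $\Pi$ is the identity and the set $\widehat{C_i}=\{x:\langle f_x,\mu_i\rangle\ge 0.96\Vert\mu_i\Vert_2^2\}$ already satisfies $|C_i\setminus\widehat{C_i}|\le\frac{10^4\varepsilon}{\varphi^2}|C_i|$; the constant $10^4$ is inherited as a black box. You instead reprove the statement from first principles via the decomposition $\langle f_x,\mu_i\rangle=\Vert\mu_i\Vert_2^2+\langle f_x-\mu_i,\mu_i\rangle$, Lemma \ref{lemma-vb} with $\alpha=\mu_i/\Vert\mu_i\Vert_2$, Markov's inequality, and a lower bound $\Vert\mu_i\Vert_2^2\ge\frac{1}{4|C_i|}$ --- this is essentially what the proof of the cited Lemma 31 looks like in the special case $S=\emptyset$, so your argument makes the paper's dependence on \cite{gluch2021spectral} self-contained and even recovers the same constant. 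Two remarks. First, for the lower bound on $\Vert\mu_i\Vert_2^2$ you could have simply cited item 1 of Lemma \ref{lemma-cm}, which the paper already imports and which gives $\Vert\mu_i\Vert_2^2\ge(1-\frac{4\sqrt{\varepsilon}}{\varphi})\frac{1}{|C_i|}\ge\frac{1}{4|C_i|}$ directly; your counting-identity derivation is valid but, as you note yourself, requires $\varepsilon/\varphi^2\lesssim 1/k$ rather than merely a universal constant, which is slightly stronger than the lemma's stated hypothesis (though harmless in the regime of Theorem \ref{thm:main-result}, and your spectral-gap alternative removes the issue). Second, the remaining steps (Cauchy--Schwarz is not even needed; only Markov) are exactly right, including the arithmetic $\frac{4}{(0.04)^2}\cdot 4=10^4$. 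In short: same statement, same constant, but a self-contained proof where the paper defers to an external lemma.
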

	
	The following result will be used in our proof:
	\begin{lemma}[Lemma 31, \cite{gluch2021spectral}]
		\label{lemma5}
		Let $k\ge 2$, $\varphi\in (0,1)$, and $\frac{\varepsilon}{\varphi^2}$ be smaller than a sufficiently small constant. Let $G=(V, E)$ be a $d$-regular graph that admits a $(k,\varphi,\varepsilon)$-clustering $C_1,\dots,C_k$. If $\mu_i's$ are cluster centers then the following conditions hold. Let $S\subset \{\mu_1,\dots,\mu_k\}$. Let $\Pi$ denote the orthogonal projection matrix on to the $span(S)^\perp$. Let $\mu\in\{\mu_1,\dots,\mu_k\}\backslash S$. Let $C$ denote the cluster corresponding to the center $\mu$. Let
		\begin{linenomath*}
			\[
			\widehat{C}:=\{ x\in V:\langle \Pi f_x,\Pi \mu\rangle \ge 0.96\Vert \Pi \mu \Vert_2^2\}
			\]
		\end{linenomath*}
		then we have:
		\begin{linenomath*}
			\[
			|C\backslash \widehat{C}|\le \frac{10^4\varepsilon}{\varphi^2}|C|.
			\]
		\end{linenomath*}
	\end{lemma}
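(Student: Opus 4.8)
The plan is to convert the defining inequality of $\widehat{C}$ into a second moment that Lemma~\ref{lemma-vb} already controls, and thereby reduce the whole statement to a single lower bound on $\Vert\Pi\mu\Vert_2$. Since $\Pi$ is an orthogonal projection ($\Pi=\Pi^{T}=\Pi^{2}$), for every $x$ we have $\langle\Pi f_x,\Pi\mu\rangle=\langle f_x,\Pi\mu\rangle$ and $\Vert\Pi\mu\Vert_2^{2}=\langle\mu,\Pi\mu\rangle$, so $\langle\Pi f_x,\Pi\mu\rangle-\Vert\Pi\mu\Vert_2^{2}=\langle f_x-\mu,\Pi\mu\rangle$; hence $x\in C\backslash\widehat{C}$ is precisely the event $\langle f_x-\mu,\Pi\mu\rangle<-0.04\,\Vert\Pi\mu\Vert_2^{2}$, which forces $\langle f_x-\mu,\Pi\mu\rangle^{2}>1.6\cdot 10^{-3}\,\Vert\Pi\mu\Vert_2^{4}$. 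All spectral embeddings and centers lie in $\mathbb{R}^{k}$, so $\Pi\mu$ is an admissible choice of $\alpha$ in Lemma~\ref{lemma-vb}; that lemma holds for unit vectors and is $2$-homogeneous, so $\sum_{i}\sum_{x\in C_i}\langle f_x-\mu_i,\Pi\mu\rangle^{2}\le\frac{4\varepsilon}{\varphi^{2}}\Vert\Pi\mu\Vert_2^{2}$, and discarding every cluster but $C$ (all summands are nonnegative) gives $\sum_{x\in C}\langle f_x-\mu,\Pi\mu\rangle^{2}\le\frac{4\varepsilon}{\varphi^{2}}\Vert\Pi\mu\Vert_2^{2}$. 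Summing the per-vertex bound over $x\in C\backslash\widehat{C}$ and comparing yields $1.6\cdot 10^{-3}\,\Vert\Pi\mu\Vert_2^{4}\,|C\backslash\widehat{C}|\le\frac{4\varepsilon}{\varphi^{2}}\Vert\Pi\mu\Vert_2^{2}$, i.e.\ $|C\backslash\widehat{C}|\le\frac{2500\,\varepsilon}{\varphi^{2}\Vert\Pi\mu\Vert_2^{2}}$. It therefore suffices to show $\Vert\Pi\mu\Vert_2^{2}\ge\frac{1}{4|C|}$.

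For this I would invoke the near-orthogonality of cluster centers. With $g_j:=\frac{1}{\sqrt{|C_j|}}\mathbbm{1}_{C_j}$ (orthonormal, since $C_1,\dots,C_k$ partition $V$) and $P:=U_{[k]}U_{[k]}^{T}$ the projection onto the bottom-$k$ eigenspace of $L$, the identity $U_{[k]}^{T}g_j=\sqrt{|C_j|}\,\mu_j$ gives $\Vert\mu_j\Vert_2^{2}=\Vert Pg_j\Vert_2^{2}/|C_j|$ and $\langle\mu_i,\mu_j\rangle=\langle Pg_i,Pg_j\rangle/\sqrt{|C_i||C_j|}$. Writing $L=I-A/d$ one computes $g_j^{T}Lg_j=\phi_{\mathrm{out}}(C_j,V)\le\varepsilon$; combining with the spectral-gap bound $\lambda_{k+1}\ge\Omega(\varphi^{2}/\poly(k))$ for $(k,\varphi,\varepsilon)$-clusterable graphs and $g_j^{T}Lg_j\ge\lambda_{k+1}\Vert(I-P)g_j\Vert_2^{2}$ gives $\Vert(I-P)g_j\Vert_2^{2}\le O(\varepsilon/\varphi^{2})$, hence $\Vert Pg_j\Vert_2^{2}\ge 1-O(\varepsilon/\varphi^{2})$ and, by Bessel's inequality applied to $\{g_\ell\}$, $\sum_{\ell\ne j}\langle Pg_j,Pg_\ell\rangle^{2}=\sum_{\ell\ne j}\langle(I-P)g_j,g_\ell\rangle^{2}\le\Vert(I-P)g_j\Vert_2^{2}=O(\varepsilon/\varphi^{2})$. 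Thus the $k\times k$ matrix $N$ with $j$-th column $\sqrt{|C_j|}\,\mu_j$ obeys $\Vert N^{T}N-I_k\Vert=O(\sqrt{k\varepsilon/\varphi^{2}})$. For $\mu=\mu_i$ and $S\subseteq\{\mu_1,\dots,\mu_k\}\backslash\{\mu_i\}$ with index set $J$, we have $\mathrm{span}(S)=\mathrm{span}\{\sqrt{|C_j|}\mu_j:j\in J\}$, and the squared norm of the projection of $\sqrt{|C_i|}\,\mu_i$ onto it is at most $\Vert(N_J^{T}N_J)^{-1}\Vert\cdot\sum_{j\in J}\langle Pg_i,Pg_j\rangle^{2}\le 2\cdot O(\varepsilon/\varphi^{2})$, since the principal submatrix $N_J^{T}N_J$ is within $O(\sqrt{k\varepsilon/\varphi^{2}})<\frac{1}{2}$ of the identity. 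Hence $\Vert P_S\mu\Vert_2^{2}\le O(\varepsilon/\varphi^{2})/|C|$, and with $\Vert\mu\Vert_2^{2}\ge(1-O(\varepsilon/\varphi^{2}))/|C|$ we obtain $\Vert\Pi\mu\Vert_2^{2}=\Vert\mu\Vert_2^{2}-\Vert P_S\mu\Vert_2^{2}\ge\frac{1}{4|C|}$ provided $\varepsilon/\varphi^{2}$ is below a sufficiently small constant (this, together with the $\lambda_{k+1}$ bound, is where any $k$-dependence hidden in that hypothesis is spent).

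The hard part is exactly this last step. The reduction in the first paragraph is pure bookkeeping with orthogonal projections plus a one-line Markov-type count, whereas the lower bound on $\Vert\Pi\mu\Vert_2^{2}$ needs the full structural picture that the cluster centers form a near-orthogonal system of vectors of norm $\approx|C_i|^{-1/2}$ — which rests on the quantitative spectral gap $\lambda_{k+1}$ for clusterable graphs and on controlling the accumulation of up to $k-1$ small cross terms when $S$ is large. A cleaner alternative, if one wants to keep constants transparent, is to isolate the inequality $\Vert\Pi\mu\Vert_2^{2}\ge\frac{1}{4|C|}$ as a separate lemma (it is established in \cite{gluch2021spectral}) and feed it directly into the count of the first paragraph.
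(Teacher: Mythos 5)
Your target is stated in the paper as a cited result (Lemma~31 of \cite{gluch2021spectral}); the paper contains no proof of it, so there is nothing internal to compare against, and your attempt has to be judged on its own.

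Your first paragraph is correct and is the natural route. The identity $\langle\Pi f_x,\Pi\mu\rangle-\Vert\Pi\mu\Vert_2^2=\langle f_x-\mu,\Pi\mu\rangle$, the translation of $x\in C\setminus\widehat{C}$ into the one-sided deviation $\langle f_x-\mu,\Pi\mu\rangle<-0.04\Vert\Pi\mu\Vert_2^2$, and the Markov-type count against the second moment controlled by Lemma~\ref{lemma-vb} with $\alpha=\Pi\mu/\Vert\Pi\mu\Vert_2$ are all sound (if $\Pi\mu=0$ then $\widehat{C}=V$ and the claim is vacuous, so the normalization is legitimate); the constants close exactly, yielding $|C\setminus\widehat{C}|\le 2500\varepsilon/(\varphi^2\Vert\Pi\mu\Vert_2^2)$ and reducing the lemma to $\Vert\Pi\mu\Vert_2^2\ge\frac{1}{4|C|}$.

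Both $\poly(k)$ worries you flag in the second paragraph are avoidable, so the hypothesis really does need only a universal constant. First, $\lambda_{k+1}=\Omega(\varphi^2)$ for $(k,\varphi,\varepsilon)$-clusterable graphs with no $k$-dependence: for any $v\perp g_1,\dots,g_k$ the intra-cluster part of the Dirichlet form already gives $v^TLv\ge\frac{\varphi^2}{2}\Vert v\Vert_2^2$ by Cheeger applied to each $G[C_i]$. Second, the $\sqrt{k}$ in your $\Vert N_J^TN_J-I\Vert$ bound comes from passing through the Frobenius norm of $(I-P)G_J$; bound its operator norm directly instead. Any unit vector $w\in\mathrm{span}\{g_j\}$ is constant on each $C_i$, so intra-cluster edges contribute nothing to $w^TLw=\frac{1}{d}\sum_{\{x,y\}\in E}(w(x)-w(y))^2$, and the inter-cluster terms give $w^TLw\le 2\varepsilon$; hence $\Vert(I-P)G_J\Vert^2\le\frac{2\varepsilon}{\lambda_{k+1}}=O(\varepsilon/\varphi^2)$ and $\lambda_{\min}(N_J^TN_J)=1-\Vert(I-P)G_J\Vert^2\ge 1-O(\varepsilon/\varphi^2)$, with no $k$. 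With that fix your argument proves the lemma as stated. (For this paper the point is moot in any case: the only in-paper use is Lemma~\ref{lemma-fx-times-mu} with $S=\emptyset$, where $\Vert\Pi\mu\Vert_2^2=\Vert\mu\Vert_2^2\ge\frac{1}{4|C|}$ follows at once from item~1 of Lemma~\ref{lemma-cm}.)
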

	
	\begin{proof}[Proof of Lemma \ref{lemma-fx-times-mu}]
		We apply $S=\emptyset$ in Lemma \ref{lemma5} so that $\Pi$ is an identity matrix and we will have $|C_i\backslash \widehat{C_i}|\le \frac{10^4\varepsilon}{\varphi^2}|C_i|$, where $\widehat{C_i}:=\{x\in V:\langle f_x,\mu_i \rangle \ge 0.96\Vert \mu_i\Vert_2^2\}$, $i\in[k]$. So 
		\begin{linenomath*}
			\[
			|C_i\cap \widehat{C_i}|\ge \left(1-\frac{10^4\varepsilon}{\varphi^2}\right)|C_i|.
			\]
		\end{linenomath*}
		We define $\widetilde{C_i}=C_i\cap \widehat{C_i}$, $i\in[k]$. Therefore, for every $C_i$, $i\in[k]$, there exists a subset $\widetilde{C_i}\subseteq C_i$ with $|\widetilde{C_i}|\ge (1-\frac{10^4\varepsilon}{\varphi^2})|C_i|$ such that for all $x\in \widetilde{C_i}$, it holds that $\langle f_x, \mu_i\rangle \ge 0.96\Vert\mu_i\Vert_2^2.$
	\end{proof}
	
	For the sake of description, we introduce the following definition.
	
	\begin{defn}[Good and bad vertices]
		\label{defn:good-vertex}
		{\rm Let $k\ge 2$ be an integer, $\varphi\in (0,1)$, and $\frac{\varepsilon}{\varphi^2}$ be smaller than a sufficiently small constant. Let $G=(V, E)$ be a $d$-regular $n$-vertex graph that admits a $(k,\varphi,\varepsilon)$-clustering $C_1,\dots,C_k$. We call a vertex $x\in V$} 
		a good vertex with respect to $\alpha\in (0,1)$ and $\beta\in (0,1)$ {\rm if $x\in (\widehat{V}\cap \widetilde{V}\cap (\cup_{i=1}^k{\widetilde{C_i}}))$, where $\widehat{V}$ is the set as defined in Lemma \ref{lemma-fx}, $\widetilde{V}$ is the set as defined in Lemma \ref{lemma-fx-mu} and $\widetilde{C_i}$ $(i\in[k])$ is the set as defined in Lemma \ref{lemma-fx-times-mu}. We call a vertex $x\in V$} a bad vertex with respect to $\alpha\in (0,1)$ and $\beta\in (0,1)$ {\rm if it's not a good vertex with respect to $\alpha$ and $\beta$}.
	\end{defn}
	
	Note that for a good vertex $x$ with respect to $\alpha\in (0,1)$ and $\beta\in (0,1)$, the following hold: (1) $\Vert f_x \Vert_2 \le \sqrt{\frac{1}{\alpha}\cdot \frac{k}{n}}$; (2) $\Vert f_x-\mu_x\Vert_2\le \sqrt{\frac{4k\varepsilon}{\beta \varphi^2}\cdot \frac{1}{n}}$; (3) $\langle f_x,\mu_x\rangle \ge 0.96\Vert \mu_x\Vert_2^2$. For a bad vertex $x$ with respect to $\alpha\in(0,1)$ and $\beta\in(0,1)$, it does not satisfy at least one of the above three conditions.
	
	The following lemma shows that if vertex $x$ and vertex $y$ are in the same cluster and both of them are good vertices with respect to $\alpha$ and $\beta$ ($\alpha$ and $\beta$ should be chosen appropriately), then the spectral dot product $\langle f_x,f_y\rangle$ is roughly $0.96\cdot \frac{1}{|C_i|}$.
	
	\begin{lemma}
		\label{lemma-big}
		Let $k\ge 2$, $\varphi \in(0,1)$ and $\frac{\varepsilon}{\varphi^2}$ be smaller than a sufficiently small constant. Let $G=(V,E)$ be a $d$-regular $n$-vertex graph that admits a $(k,\varphi,\varepsilon)$-clustering $C_1,\dots,C_k$. Suppose that $x,y\in V$ are in the same cluster $C_i,i\in[k]$ and both of them are good vertices with respect to $\alpha=2\sqrt{k}\cdot (\frac{\varepsilon}{\varphi^2})^{1/3}$ and $\beta=2\sqrt{k}\cdot (\frac{\varepsilon}{\varphi^2})^{1/3}$. Then
		\begin{linenomath*}
			\[
			\langle f_x, f_y \rangle\ge 0.96\left(1-\frac{4\sqrt{\varepsilon}}{\varphi}\right)\frac{1}{|C_i|}-\frac{\sqrt{k}}{n}\cdot \left(\frac{\varepsilon}{\varphi^2}\right)^{1/6}.
			\]
		\end{linenomath*}
	\end{lemma}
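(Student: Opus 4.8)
The plan is to expand $\langle f_x, f_y\rangle$ by inserting the common cluster center $\mu_i$ and controlling the error terms using the three defining properties of good vertices. First I would write
\begin{linenomath*}
\[
\langle f_x, f_y\rangle = \langle \mu_i, \mu_i\rangle + \langle f_x - \mu_i, \mu_i\rangle + \langle \mu_i, f_y - \mu_i\rangle + \langle f_x - \mu_i, f_y - \mu_i\rangle,
\]
\end{linenomath*}
so the task reduces to (a) lower bounding $\Vert\mu_i\Vert_2^2$, (b) lower bounding the two cross terms $\langle f_x-\mu_i,\mu_i\rangle$ and $\langle f_y-\mu_i,\mu_i\rangle$, and (c) bounding the last term in absolute value. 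A cleaner route, which I expect the authors take, is to instead write $\langle f_x,f_y\rangle = \langle f_x,\mu_i\rangle + \langle f_x, f_y-\mu_i\rangle$ and use property (3) directly on the first summand: since $x\in\widetilde{C_i}$, we have $\langle f_x,\mu_i\rangle \ge 0.96\Vert\mu_i\Vert_2^2$. Then the only error term is $\langle f_x, f_y-\mu_i\rangle$, which by Cauchy–Schwarz is at most $\Vert f_x\Vert_2 \cdot \Vert f_y-\mu_i\Vert_2$ in absolute value; applying property (1) for $x$ and property (2) for $y$ with the stated choices $\alpha=\beta=2\sqrt{k}(\varepsilon/\varphi^2)^{1/3}$ gives
\begin{linenomath*}
\[
|\langle f_x, f_y-\mu_i\rangle| \le \sqrt{\tfrac{1}{\alpha}\cdot\tfrac{k}{n}}\cdot\sqrt{\tfrac{4k\varepsilon}{\beta\varphi^2}\cdot\tfrac{1}{n}} = \frac{2k}{n\sqrt{\alpha\beta}}\cdot\frac{\sqrt{\varepsilon}}{\varphi} = \frac{2k}{n\cdot 2\sqrt{k}(\varepsilon/\varphi^2)^{1/3}}\cdot\frac{\sqrt{\varepsilon}}{\varphi} = \frac{\sqrt{k}}{n}\left(\frac{\varepsilon}{\varphi^2}\right)^{1/6},
\]
\end{linenomath*}
which is exactly the additive error in the statement.

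It remains to lower bound $\Vert\mu_i\Vert_2^2$ in terms of $1/|C_i|$. Here I would use the fact that $\mu_i$ is the average of the embeddings over $C_i$ together with the near-orthogonality of distinct cluster centers — or, more directly, a known estimate from \cite{gluch2021spectral} that $\Vert\mu_i\Vert_2^2 \ge (1 - O(\varepsilon/\varphi^2))\frac{1}{|C_i|}$, or at least $\Vert\mu_i\Vert_2^2\ge(1-\frac{4\sqrt\varepsilon}{\varphi})\frac{1}{|C_i|}$. The intuition is that $\sum_{x\in C_i}\Vert f_x\Vert_2^2$ is close to $\Vert$ (restriction of the low-frequency eigenspace to $C_i$) $\Vert_F^2$, which in a well-clusterable graph is close to $1$; and because the embeddings concentrate around $\mu_i$ (Lemma \ref{lemma-vb}), most of this mass is carried by $\Vert\mu_i\Vert_2^2\cdot|C_i|$. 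So I would cite or re-derive $\Vert\mu_i\Vert_2^2\ge(1-\frac{4\sqrt\varepsilon}{\varphi})\frac{1}{|C_i|}$, plug it into $0.96\Vert\mu_i\Vert_2^2$, and combine with the error bound above to conclude
\begin{linenomath*}
\[
\langle f_x,f_y\rangle \ge 0.96\left(1-\frac{4\sqrt\varepsilon}{\varphi}\right)\frac{1}{|C_i|} - \frac{\sqrt k}{n}\left(\frac{\varepsilon}{\varphi^2}\right)^{1/6}.
\]
\end{linenomath*}

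The main obstacle is the lower bound on $\Vert\mu_i\Vert_2^2$: everything else is a one-line Cauchy–Schwarz plus substitution of the good-vertex parameters. If a clean black-box bound $\Vert\mu_i\Vert_2^2\ge(1-\frac{4\sqrt\varepsilon}{\varphi})\frac{1}{|C_i|}$ is not already available in \cite{gluch2021spectral}, I would derive it from Lemma \ref{lemma-vb} applied to $\alpha=\mu_i/\Vert\mu_i\Vert_2$ (bounding the within-cluster variance) combined with the structural fact that the normalized indicator vector $\mathbbm{1}_{C_i}/\sqrt{|C_i|}$ has almost all its mass in the span of $u_1,\dots,u_k$ when $\phi_{\mathrm{out}}(C_i,V)\le\varepsilon$ and $\phi_{\mathrm{in}}(C_i)\ge\varphi$ — a standard higher-order Cheeger / spectral argument. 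I should also double-check that the chosen $\alpha,\beta$ are $<1$, which holds precisely because $\varepsilon/\varphi^2$ is assumed smaller than a sufficiently small constant (and in fact $\ll \gamma^3/(k^{9/2}\log^3 k)$ in the main theorem), so that $\widehat V$, $\widetilde V$, $\widetilde{C_i}$ are all nonempty and the good-vertex notion is meaningful.
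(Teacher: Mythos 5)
Your proposal is correct and follows essentially the same route as the paper: the decomposition $\langle f_x,f_y\rangle=\langle f_x,\mu_i\rangle+\langle f_x,f_y-\mu_i\rangle$, Cauchy--Schwarz with the good-vertex bounds yielding exactly $\frac{\sqrt{k}}{n}(\varepsilon/\varphi^2)^{1/6}$, and then the bound $\Vert\mu_i\Vert_2^2\ge(1-\frac{4\sqrt{\varepsilon}}{\varphi})\frac{1}{|C_i|}$, which is indeed available as a black box (Lemma~\ref{lemma-cm}, item 1, i.e.\ Lemma~7 of \cite{gluch2021spectral}), so no re-derivation is needed.
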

	
	The following result will also be used in our proof:
	\begin{lemma}[Lemma 7, \cite{gluch2021spectral}]
		\label{lemma-cm}
		Let $k\ge 2$ be an integer, $\varphi\in(0,1)$, and $\varepsilon\in(0,1)$. Let $G=(V,E)$ be a $d$-regular graph that admits a $(k,\varphi,\varepsilon)$-clustering $C_1,\dots,C_k$. Then we have
		\begin{compactenum}
			\item for all $i\in[k]$, $\left|\Vert\mu_i\Vert_2^2-\frac{1}{|C_i|} \right|\le \frac{4\sqrt{\varepsilon}}{\varphi}\frac{1}{|C_i|}$,
			\item for all $i\ne j\in[k]$, $|\langle \mu_i,\mu_j\rangle|\le \frac{8\sqrt{\varepsilon}}{\varphi}\frac{1}{\sqrt{|C_i||C_j|}}$.
		\end{compactenum}
	\end{lemma}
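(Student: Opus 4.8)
The plan is to reduce both inequalities to a single linear-algebraic estimate: for each $i$, the indicator vector of $C_i$ is almost contained in the span of the bottom $k$ eigenvectors of $L$. For $C\subseteq V$ write $\mathbbm{1}_C\in\{0,1\}^n$ for its indicator vector, and let $P:=U_{[k]}U_{[k]}^{T}$ be the orthogonal projection onto $\mathrm{span}(u_1,\dots,u_k)$. Since $f_x=U_{[k]}^{T}\mathbbm{1}_x$, averaging over $x\in C_i$ gives $\mu_i=\frac{1}{|C_i|}U_{[k]}^{T}\mathbbm{1}_{C_i}$, hence
\[
\|\mu_i\|_2^2=\frac{\mathbbm{1}_{C_i}^{T}P\mathbbm{1}_{C_i}}{|C_i|^2}=\frac{\|P\mathbbm{1}_{C_i}\|_2^2}{|C_i|^2},\qquad \langle\mu_i,\mu_j\rangle=\frac{\langle P\mathbbm{1}_{C_i},P\mathbbm{1}_{C_j}\rangle}{|C_i||C_j|}.
\]
Since a projection is a contraction, $\|\mu_i\|_2^2\le\|\mathbbm{1}_{C_i}\|_2^2/|C_i|^2=1/|C_i|$, and by Cauchy--Schwarz $|\langle\mu_i,\mu_j\rangle|\le\|\mu_i\|_2\|\mu_j\|_2\le 1/\sqrt{|C_i||C_j|}$; thus if $4\sqrt{\varepsilon}/\varphi\ge 1$ (resp.\ $8\sqrt{\varepsilon}/\varphi\ge 1$) the respective claim is immediate, and I may assume $\sqrt{\varepsilon}/\varphi<1$. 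Everything then comes down to bounding $\|(I-P)\mathbbm{1}_{C_i}\|_2$.

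First I would bound the Rayleigh quotient of $\mathbbm{1}_{C_i}$. From $L=I-\frac1d A$ and $d$-regularity, a count of the edges with both endpoints in $C_i$ and those leaving $C_i$ gives $\mathbbm{1}_{C_i}^{T}L\mathbbm{1}_{C_i}=\frac{|E(C_i,V\setminus C_i)|}{d}=|C_i|\cdot\phi_{\mathrm{out}}(C_i,V)\le\varepsilon|C_i|$. Expanding $\mathbbm{1}_{C_i}=\sum_{j=1}^{n}a_j u_j$ (so $\sum_j a_j^2=|C_i|$ and $\mathbbm{1}_{C_i}^{T}L\mathbbm{1}_{C_i}=\sum_j\lambda_j a_j^2$), and using $\lambda_j\ge\lambda_{k+1}$ for all $j>k$ together with the spectral-gap bound $\lambda_{k+1}\ge\varphi^2/2$,
\[
\|(I-P)\mathbbm{1}_{C_i}\|_2^2=\sum_{j>k}a_j^2\le\frac{1}{\lambda_{k+1}}\sum_{j>k}\lambda_j a_j^2\le\frac{\varepsilon|C_i|}{\lambda_{k+1}}\le\frac{2\varepsilon}{\varphi^2}|C_i|\le\frac{2\sqrt{\varepsilon}}{\varphi}|C_i|,
\]
the last step using $\sqrt{\varepsilon}/\varphi<1$. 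The bound $\lambda_{k+1}\ge\varphi^2/2$ is the one non-elementary input: it is the standard higher-order-Cheeger-type consequence of the inner-conductance hypothesis $\phi_{\mathrm{in}}(C_\ell)\ge\varphi$ for all $\ell$ (see \cite{peng2020robust,gluch2021spectral,czumaj2015testing}), and it is precisely the estimate underlying Lemma~\ref{lemma-vb}.

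It remains to assemble the two claims. For part~1, by the Pythagorean theorem $\|P\mathbbm{1}_{C_i}\|_2^2=\|\mathbbm{1}_{C_i}\|_2^2-\|(I-P)\mathbbm{1}_{C_i}\|_2^2=|C_i|-\|(I-P)\mathbbm{1}_{C_i}\|_2^2$, which lies between $(1-\tfrac{2\sqrt{\varepsilon}}{\varphi})|C_i|$ and $|C_i|$; dividing by $|C_i|^2$ gives $\bigl|\|\mu_i\|_2^2-\tfrac{1}{|C_i|}\bigr|\le\tfrac{2\sqrt{\varepsilon}}{\varphi}\cdot\tfrac{1}{|C_i|}\le\tfrac{4\sqrt{\varepsilon}}{\varphi}\cdot\tfrac{1}{|C_i|}$. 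For part~2, since $C_i\cap C_j=\emptyset$ we have $\langle\mathbbm{1}_{C_i},\mathbbm{1}_{C_j}\rangle=0$, so splitting each vector into its $P$-part and $(I-P)$-part yields $\langle P\mathbbm{1}_{C_i},P\mathbbm{1}_{C_j}\rangle=-\langle(I-P)\mathbbm{1}_{C_i},(I-P)\mathbbm{1}_{C_j}\rangle$, and Cauchy--Schwarz with the bound above gives
\[
|\langle\mu_i,\mu_j\rangle|\le\frac{\|(I-P)\mathbbm{1}_{C_i}\|_2\,\|(I-P)\mathbbm{1}_{C_j}\|_2}{|C_i||C_j|}\le\frac{2\varepsilon/\varphi^2}{\sqrt{|C_i||C_j|}}\le\frac{8\sqrt{\varepsilon}/\varphi}{\sqrt{|C_i||C_j|}}.
\]

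Apart from importing $\lambda_{k+1}=\Omega(\varphi^2)$, everything above is a short edge count, a Parseval/Pythagoras computation, and one use of Cauchy--Schwarz, so the spectral-gap lower bound is really the only obstacle. If a self-contained treatment were wanted, one would recover it from the structure theorem for clusterable graphs (roughly: at most $k$ eigenvalues of $L$ can fall below $\varphi^2/2$, since an additional low-Rayleigh-quotient direction would have to concentrate inside some single cluster $C_\ell$, contradicting $\phi_{\mathrm{in}}(C_\ell)\ge\varphi$ via Cheeger's inequality applied within $G[C_\ell]$).
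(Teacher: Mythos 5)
The paper does not prove this statement at all: it is imported verbatim as Lemma~7 of \cite{gluch2021spectral}, so there is no in-paper proof to compare your attempt against. Judged on its own, your argument is correct. The reduction $\|\mu_i\|_2^2=\|P\mathbbm{1}_{C_i}\|_2^2/|C_i|^2$ with $P=U_{[k]}U_{[k]}^T$, the Rayleigh-quotient edge count $\mathbbm{1}_{C_i}^T L\mathbbm{1}_{C_i}=|E(C_i,V\setminus C_i)|/d\le\varepsilon|C_i|$, the bound $\|(I-P)\mathbbm{1}_{C_i}\|_2^2\le\varepsilon|C_i|/\lambda_{k+1}$, and the Pythagoras/Cauchy--Schwarz assembly (using $\langle\mathbbm{1}_{C_i},\mathbbm{1}_{C_j}\rangle=0$ and orthogonality of $P$ and $I-P$ to flip the sign) are all sound, as is the trick of disposing of the regime $\sqrt{\varepsilon}/\varphi\ge 1$ up front so that $\varepsilon/\varphi^2\le\sqrt{\varepsilon}/\varphi$ can be used to land on the stated $\sqrt{\varepsilon}/\varphi$-type bounds with room to spare ($2$ vs.\ $4$, $2$ vs.\ $8$). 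You also isolate the single nonelementary ingredient honestly: the eigenvalue gap $\lambda_{k+1}\ge\varphi^2/2$, which is exactly the structure-theorem input that \cite{gluch2021spectral,chiplunkar2018testing} establish and that also underlies Lemma~\ref{lemma-vb}. One small caution: calling this gap ``the standard higher-order-Cheeger-type consequence'' undersells it slightly — it does not follow directly from the Lee--Oveis Gharan--Trevisan inequality but requires the dedicated argument in those references (roughly, that a $(k{+}1)$-st low-Rayleigh-quotient direction would have to localize in one $C_\ell$ and contradict $\phi_{\rm in}(C_\ell)\ge\varphi$); your parenthetical sketch of this at the end is on the right track. This is essentially the same projection-based route taken in \cite{gluch2021spectral}, just reorganized.
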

	\begin{proof}[Proof of Lemma \ref{lemma-big}]
		According to the distributive law of dot product, we have
		\begin{linenomath*}
			\[
			\langle f_x, f_y \rangle=\langle f_x, f_y-\mu_i+\mu_i \rangle=\langle f_x, f_y-\mu_i \rangle+\langle f_x, \mu_i \rangle.
			\]
		\end{linenomath*}
		By using Cauchy-Schwarz Inequality, we have $|\langle f_x, f_y-\mu_i \rangle| \le \Vert f_x \Vert_2\cdot \Vert f_y-\mu_i \Vert_2$.
		Since $x$ and $y$ are both good vertices with respect to $\alpha=2\sqrt{k}\cdot (\frac{\varepsilon}{\varphi^2})^{1/3}$ and $\beta=2\sqrt{k}\cdot (\frac{\varepsilon}{\varphi^2})^{1/3}$, we have 
		\begin{linenomath*}
			\[
			|\langle f_x, f_y-\mu_i \rangle| \le \Vert f_x \Vert_2\cdot \Vert f_y-\mu_i \Vert_2\le \sqrt{\frac{1}{\alpha}\cdot \frac{k}{n}}\cdot \sqrt{\frac{4k\varepsilon}{\beta \varphi^2}\cdot \frac{1}{n}}=\frac{\sqrt{k}}{n}\cdot \left(\frac{\varepsilon}{\varphi^2}\right)^{1/6},
			\]
		\end{linenomath*}
		which gives us that $\langle f_x, f_y-\mu_i \rangle\ge -\frac{\sqrt{k}}{n}\cdot (\frac{\varepsilon}{\varphi^2})^{1/6}$.
		Recall that $x$ is a good vertex, we have $\langle f_x,\mu_i\rangle \ge 0.96\Vert \mu_i\Vert_2^2$. Hence, it holds that
		\begin{linenomath}
			\begin{align*}
				\langle f_x, f_y \rangle&=\langle f_x, f_y-\mu_i \rangle+\langle f_x, \mu_i \rangle\\
				&\ge 0.96\Vert\mu_i\Vert_2^2-\frac{\sqrt{k}}{n}\cdot \left(\frac{\varepsilon}{\varphi^2}\right)^{1/6}\\
				&\ge 0.96\left(1-\frac{4\sqrt{\varepsilon}}{\varphi}\right)\frac{1}{|C_i|}-\frac{\sqrt{k}}{n}\cdot \left(\frac{\varepsilon}{\varphi^2}\right)^{1/6}.
			\end{align*}
		\end{linenomath}
		The last inequality is according to item $1$ in Lemma \ref{lemma-cm}.
	\end{proof}

	The following lemma shows that if vertex $x$ and vertex $y$ are in different clusters and both of them are good vertices with respect to $\alpha$ and $\beta$ ($\alpha$ and $\beta$ should be chosen appropriately), then the spectral dot product $\langle f_x,f_y\rangle$ is close to $0$.
	
	\begin{lemma}
		\label{lemma-small}
		Let $k\ge 2$, $\varphi \in(0,1)$ and $\frac{\varepsilon}{\varphi^2}$ be smaller than a sufficiently small constant. Let $G=(V,E)$ be a $d$-regular $n$-vertex graph that admits a $(k,\varphi,\varepsilon)$-clustering $C_1,\dots,C_k$. Suppose that $x\in C_i$, $y\in C_j,(i,j\in[k], i\ne j)$ and both of them are good vertices with respect to $\alpha=2\sqrt{k}\cdot (\frac{\varepsilon}{\varphi^2})^{1/3}$ and $\beta=2\sqrt{k}\cdot (\frac{\varepsilon}{\varphi^2})^{1/3}$, the following holds:
		\begin{linenomath*}
			\[
			\langle f_x,f_y\rangle\le \frac{\sqrt{k}}{n}\cdot \left(\frac{\varepsilon}{\varphi^2}\right)^{1/6}+\frac{\sqrt{2}k^{1/4}}{\sqrt{n}}\cdot\left(\frac{\varepsilon}{\varphi^2}\right)^{1/3}\cdot\sqrt{\left(1+\frac{4\sqrt{\varepsilon}}{\varphi}\right)\frac{1}{|C_j|}}+\frac{8\sqrt{\varepsilon}}{\varphi}\cdot\frac{1}{\sqrt{|C_i|\cdot |C_j|}}.
			\]
		\end{linenomath*}
		\begin{proof}
			According to the distributive law of dot product, we have
			\begin{align*}
				\langle f_x,f_y \rangle&=\langle f_x,f_y-\mu_j+\mu_j \rangle\\
				&=\langle f_x,f_y-\mu_j \rangle+\langle f_x,\mu_j \rangle \\
				&=\langle f_x,f_y-\mu_j \rangle+\langle f_x-\mu_i+\mu_i,\mu_j \rangle\\
				&=\langle f_x,f_y-\mu_j \rangle+\langle f_x-\mu_i,\mu_j \rangle+\langle \mu_i,\mu_j \rangle.
			\end{align*}
			By Cauchy-Schwarz Inequality, we have 
			\begin{linenomath*}
				\[
				|\langle f_x, f_y-\mu_j \rangle| \le \Vert f_x \Vert_2\cdot \Vert f_y-\mu_j \Vert_2
				\]
			\end{linenomath*}
			and
			\begin{linenomath*}
				\[
				|\langle f_x-\mu_i, \mu_j \rangle| \le \Vert \mu_j \Vert_2\cdot \Vert f_x-\mu_i \Vert_2.
				\]
			\end{linenomath*}
			Since $x$ and $y$ are both good vertices with respect to $\alpha=2\sqrt{k}\cdot (\frac{\varepsilon}{\varphi^2})^{1/3}$ and $\beta=2\sqrt{k}\cdot (\frac{\varepsilon}{\varphi^2})^{1/3}$, we have
			\begin{linenomath*}
				\[
				|\langle f_x, f_y-\mu_j \rangle| \le \Vert f_x \Vert_2\cdot \Vert f_y-\mu_j \Vert_2\le \sqrt{\frac{1}{\alpha}\cdot \frac{k}{n}}\cdot \sqrt{\frac{4k\varepsilon}{\beta \varphi^2}\cdot \frac{1}{n}}=\frac{\sqrt{k}}{n}\cdot \left(\frac{\varepsilon}{\varphi^2}\right)^{1/6}
				\]
			\end{linenomath*}
			and
			\begin{linenomath*}
				\[
				|\langle f_x-\mu_i, \mu_j \rangle| \le \Vert \mu_j \Vert_2\cdot \Vert f_x-\mu_i \Vert_2\le\Vert \mu_j\Vert_2\cdot \sqrt{\frac{4k\varepsilon}{\beta \varphi^2}\cdot \frac{1}{n}}=\frac{\sqrt{2}k^{1/4}}{\sqrt{n}}\cdot\left(\frac{\varepsilon}{\varphi^2}\right)^{1/3}\cdot \Vert \mu_j\Vert_2.
				\]
			\end{linenomath*}
			So we have
			\begin{align*}
				\langle f_x,f_y \rangle&=\langle f_x,f_y-\mu_j \rangle+\langle f_x-\mu_i,\mu_j \rangle+\langle \mu_i,\mu_j \rangle\\
				&\le \Vert f_x \Vert_2\cdot \Vert f_y-\mu_j \Vert_2+\Vert \mu_j \Vert_2\cdot \Vert f_x-\mu_i \Vert_2+\langle \mu_i,\mu_j \rangle\\
				&\le \frac{\sqrt{k}}{n}\cdot \left(\frac{\varepsilon}{\varphi^2}\right)^{1/6}+\frac{\sqrt{2}k^{1/4}}{\sqrt{n}}\cdot\left(\frac{\varepsilon}{\varphi^2}\right)^{1/3}\cdot \Vert \mu_j\Vert_2+\langle \mu_i,\mu_j \rangle\\
				&\le\frac{\sqrt{k}}{n}\cdot \left(\frac{\varepsilon}{\varphi^2}\right)^{1/6}+\frac{\sqrt{2}k^{1/4}}{\sqrt{n}}\cdot\left(\frac{\varepsilon}{\varphi^2}\right)^{1/3}\cdot\sqrt{\left(1+\frac{4\sqrt{\varepsilon}}{\varphi}\right)\frac{1}{|C_j|}}+\frac{8\sqrt{\varepsilon}}{\varphi}\cdot\frac{1}{\sqrt{|C_i|\cdot |C_j|}}.
			\end{align*}
			The last inequality is according to item $1$ and item $2$ in Lemma \ref{lemma-cm}.
		\end{proof}  
	\end{lemma}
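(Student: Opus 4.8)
The plan is to upper bound $\langle f_x, f_y\rangle$ by splitting it into three inner products, two of which are controlled by the ``good vertex'' inequalities and the third by the near-orthonormality of the cluster centers (Lemma \ref{lemma-cm}). Since $x\in C_i$ and $y\in C_j$ with $i\neq j$, I would first write $f_y=(f_y-\mu_j)+\mu_j$, and then peel off $\mu_i$ from the cross term by writing $f_x=(f_x-\mu_i)+\mu_i$ inside $\langle f_x,\mu_j\rangle$, obtaining $\langle f_x, f_y\rangle = \langle f_x, f_y-\mu_j\rangle + \langle f_x-\mu_i,\mu_j\rangle + \langle \mu_i,\mu_j\rangle$. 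The point of the second split is that we have no direct handle on $\langle f_x,\mu_j\rangle$ (it mixes two different clusters), but once $\mu_i$ is extracted we can use that $\Vert f_x-\mu_i\Vert_2$ is small together with the bound on $\langle\mu_i,\mu_j\rangle$.

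Next I would bound each term. For $\langle f_x, f_y-\mu_j\rangle$, Cauchy--Schwarz gives $|\langle f_x, f_y-\mu_j\rangle|\le \Vert f_x\Vert_2\cdot\Vert f_y-\mu_j\Vert_2$; since $x$ is good it lies in $\widehat V$, so $\Vert f_x\Vert_2\le\sqrt{\frac{1}{\alpha}\cdot\frac{k}{n}}$ (Lemma \ref{lemma-fx}), and since $y$ is good it lies in $\widetilde V$, so $\Vert f_y-\mu_j\Vert_2\le\sqrt{\frac{4k\varepsilon}{\beta\varphi^2}\cdot\frac{1}{n}}$ (Lemma \ref{lemma-fx-mu}). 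Plugging in $\alpha=\beta=2\sqrt k\,(\varepsilon/\varphi^2)^{1/3}$ collapses the powers of $k$ and $\varepsilon/\varphi^2$ and yields precisely $\frac{\sqrt k}{n}(\varepsilon/\varphi^2)^{1/6}$. For $\langle f_x-\mu_i,\mu_j\rangle$, Cauchy--Schwarz gives $|\langle f_x-\mu_i,\mu_j\rangle|\le\Vert\mu_j\Vert_2\cdot\Vert f_x-\mu_i\Vert_2$, where $\Vert f_x-\mu_i\Vert_2$ is bounded again by the $\widetilde V$-membership of $x$, and $\Vert\mu_j\Vert_2\le\sqrt{(1+\frac{4\sqrt{\varepsilon}}{\varphi})\frac{1}{|C_j|}}$ by Lemma \ref{lemma-cm}(1); substituting the same $\beta$ simplifies the prefactor to $\frac{\sqrt2\,k^{1/4}}{\sqrt n}(\varepsilon/\varphi^2)^{1/3}$. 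Finally $\langle\mu_i,\mu_j\rangle\le|\langle\mu_i,\mu_j\rangle|\le\frac{8\sqrt{\varepsilon}}{\varphi}\cdot\frac{1}{\sqrt{|C_i||C_j|}}$ by Lemma \ref{lemma-cm}(2). Summing the three bounds gives the claimed inequality.

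I do not expect a genuine obstacle here; the proof is a short computation once the right decomposition is chosen. The two small points worth care are (i) making the extra split $f_x=(f_x-\mu_i)+\mu_i$ so that the otherwise intractable term $\langle f_x,\mu_j\rangle$ becomes a sum of a Cauchy--Schwarz-controlled term and a center-orthogonality term, and (ii) noticing that only properties (1) (for $x$) and (2) (for both $x$ and $y$) of a good vertex are needed, whereas property (3) plays no role in this direction --- it is used only for the ``same cluster'' estimate of Lemma \ref{lemma-big}. The remaining work is exponent bookkeeping when $\alpha=\beta=2\sqrt k\,(\varepsilon/\varphi^2)^{1/3}$ is substituted, which is routine; one should also recall that $\varepsilon/\varphi^2$ below a sufficiently small absolute constant ensures $\alpha,\beta\in(0,1)$, so the good-vertex sets are well defined.
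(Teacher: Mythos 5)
Your proposal is correct and follows essentially the same route as the paper's proof: the identical decomposition $\langle f_x,f_y\rangle=\langle f_x,f_y-\mu_j\rangle+\langle f_x-\mu_i,\mu_j\rangle+\langle\mu_i,\mu_j\rangle$, the same two Cauchy--Schwarz applications using the good-vertex bounds from Lemmas \ref{lemma-fx} and \ref{lemma-fx-mu}, and the same use of both items of Lemma \ref{lemma-cm} for $\Vert\mu_j\Vert_2$ and $\langle\mu_i,\mu_j\rangle$. Your side remark that property (3) of good vertices is not needed here is also accurate.
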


	\subsection{Proof of Theorem \ref{thm:main-result}}
	Now we prove our main result Theorem \ref{thm:main-result}.
	\begin{proof}
		Let $s = \frac{10k\log k}{\gamma}$ be the size of sampling set $S$, let $\alpha=\beta=2\sqrt{k}\cdot (\frac{\varepsilon}{\varphi^2})^{1/3}$. Recall that we call a vertex $x$ a bad vertex, if $x\in (V\backslash\widehat{V})\cup (V\backslash\widetilde{V})\cup (V\backslash(\cup_{i=1}^k\widetilde{C_i}))$, where $\widehat{V}, \widetilde{V}, \widetilde{C_i},i\in[k]$ are defined in Lemma \ref{lemma-fx}, \ref{lemma-fx-mu}, \ref{lemma-fx-times-mu} respectively. We use $B$ to denote the set of all bad vertices. Then we have
		$|B|\le (\alpha + \beta + \frac{10^4\varepsilon}{\varphi^2})\cdot n=(4\sqrt{k}\cdot (\frac{\varepsilon}{\varphi^2})^{1/3}+\frac{10^4\varepsilon}{\varphi^2})\cdot n$.
		
		We let $\kappa\le 4\sqrt{k}\cdot (\frac{\varepsilon}{\varphi^2})^{1/3}+\frac{10^4\varepsilon}{\varphi^2}$ be the fraction of $B$ in $V$. Since $\frac{\varepsilon}{\varphi^2}< \frac{\gamma^3}{4^3\cdot 10^9\cdot k^{\frac{9}{2}}\cdot \log^3k}$, we have
		$\kappa\le4\sqrt{k}\cdot \left(\frac{\varepsilon}{\varphi^2}\right)^{1/3}+\frac{10^4\varepsilon}{\varphi^2}\le \frac{\gamma}{10^3k\log k}+\frac{\gamma^3}{4^3\cdot10^5\cdot k^{\frac{9}{2}}\log^3 k}\le \frac{2\gamma}{10^3k\log k}=\frac{1}{50s}.$ 
		
		Therefore, by union bound, with probability at least $1-\kappa\cdot s\ge 1-\frac{1}{50s}\cdot s=1-\frac{1}{50}$, all the vertices in $S$ are good (we fixed $\alpha=\beta=2\sqrt{k}\cdot (\frac{\varepsilon}{\varphi^2})^{1/3}$, so we will omit ``with respect to $\alpha$ and $\beta$'' in the following). In the following, we will assume all the vertices in $S$ are good.
		
		Recall that for $i\in[k], |C_i|\ge \gamma\frac{n}{k}$, so with probability at least $1-(1-\frac{\gamma}{k})^s=1-(1-\frac{1}{\frac{k}{\gamma}})^{\frac{k}{\gamma}\cdot 10\log k}\ge 1-\frac{1}{k^{10}}\ge 1-\frac{1}{50k}$, there exists at least one vertex in $S$ that is from $C_i$. Then with probability at least $1-\frac{1}{50}$, for all $k$ clusters $C_1,\dots,C_k$, there exists at least one vertex in $S$ that is from $C_i$.
		
		Let $\xi=\frac{\sqrt{\gamma}}{1000}$. By Lemma \ref{dot-product}, we know that with probability at least $1-\frac{1}{n^{100}}$, for any pair of $x,y\in V$, $\textsc{SpectralDotProductOracle}$($G, x, y, 1/2, \xi, \mathcal{D}$) computes an output value $\langle f_x, f_y\rangle_{\rm apx}$ such that $\Big| \langle f_x,f_y\rangle_{\rm apx}-\langle f_x,f_y\rangle\Big|\le \frac{\xi}{n}
		$.
		So, with probability at least $1-\frac{s\cdot s}{n^{100}}\ge 1-\frac{1}{n^{50}}$, for all pairs $x,y\in S$, $\textsc{SpectralDotProductOracle}$($G, x, y, 1/2, \xi, \mathcal{D}$) computes an output value $\langle f_x, f_y\rangle_{\rm apx}$ such that $\Big| \langle f_x,f_y\rangle_{\rm apx}-\langle f_x,f_y\rangle\Big|\le \frac{\xi}{n}$. In the following, we will assume the above inequality holds for any $x,y\in S$.
		
		By Lemma \ref{lemma-big}, we know that if $x,y$ are in the same cluster and both of them are good vertices, then we have $\langle f_x, f_y \rangle\ge 0.96(1-\frac{4\sqrt{\varepsilon}}{\varphi})\frac{1}{|C_i|}-\frac{\sqrt{k}}{n}\cdot (\frac{\varepsilon}{\varphi^2})^{1/6}\ge 0.96(1-\frac{4\sqrt{\varepsilon}}{\varphi})\frac{\gamma k}{n}-\frac{\sqrt{k}}{n}(\frac{\varepsilon}{\varphi^2})^{1/6}$ since $|C_i|\le \frac{n}{\gamma k}$. By Lemma \ref{lemma-small}, we know that if $x,y$ are in the different clusters and both of them are good vertices, then we have $\langle f_x,f_y\rangle\le \frac{\sqrt{k}}{n}\cdot (\frac{\varepsilon}{\varphi^2})^{1/6}+\frac{\sqrt{2}k^{1/4}}{\sqrt{n}}\cdot(\frac{\varepsilon}{\varphi^2})^{1/3}\cdot\sqrt{(1+\frac{4\sqrt{\varepsilon}}{\varphi})\frac{1}{|C_j|}}+\frac{8\sqrt{\varepsilon}}{\varphi}\cdot\frac{1}{\sqrt{|C_i|\cdot |C_j|}}\le \frac{\sqrt{k}}{n}\cdot (\frac{\varepsilon}{\varphi^2})^{1/6}+\sqrt{\frac{2}{\gamma}}\frac{k^{3/4}}{n}\sqrt{1+\frac{4\sqrt{\varepsilon}}{\varphi}}\cdot(\frac{\varepsilon}{\varphi^2})^{1/3}+ \frac{8\sqrt{\varepsilon}}{\varphi}\cdot\frac{k}{\gamma n}$ since $\frac{\gamma n}{k}\le |C_i|$ for all $i\in[k]$.
		
		Recall that $\frac{\varepsilon}{\varphi^2}< \frac{\gamma^3}{4^3\cdot 10^9\cdot k^{\frac{9}{2}}\cdot \log^3k}$ and $\gamma\in(0.001,1]$. Let $\theta = 0.96(1-\frac{4\sqrt{\varepsilon}}{\varphi})\frac{\gamma k}{n}-\frac{\sqrt{k}}{n}(\frac{\varepsilon}{\varphi^2})^{1/6}-\frac{\xi}{n}$, then we have $\theta>\frac{\sqrt{\gamma}}{n}\cdot (0.96\sqrt{\gamma}k-\frac{0.48}{10^{9/2}\cdot k^{5/4}\log^{3/2}k}-\frac{1}{2\cdot 10^{3/2}\cdot k^{1/4}\log^{1/2}k}-\frac{1}{1000})>0.034\cdot \frac{\sqrt{\gamma}}{n}$. Let $S$ satisfies that all the vertices in $S$ are good, and $S$ contains at least one vertex from $C_i$ for all $i=1,\dots,k$. For any $x,y\in S$, then:
		
		\begin{compactenum}
			\item If $x,y$ belong to the same cluster, by above analysis, we know that $\langle f_x, f_y \rangle\ge 0.96(1-\frac{4\sqrt{\varepsilon}}{\varphi})\frac{\gamma k}{n}-\frac{\sqrt{k}}{n}(\frac{\varepsilon}{\varphi^2})^{1/6}$. Then it holds that $\langle f_x, f_y\rangle_{\rm apx}\ge \langle f_x, f_y\rangle-\frac{\xi}{n}\ge 0.96(1-\frac{4\sqrt{\varepsilon}}{\varphi})\frac{\gamma k}{n}-\frac{\sqrt{k}}{n}(\frac{\varepsilon}{\varphi^2})^{1/6}-\frac{\xi}{n}=\theta$. Thus, an edge $(x,y)$ will be added to $H$ (at lines 8 and 9 of Alg.\ref{alg:construct}).
			\item If $x,y$ belong to two different clusters, by above analysis, we know that $\langle f_x,f_y\rangle\le \frac{\sqrt{k}}{n}\cdot (\frac{\varepsilon}{\varphi^2})^{1/6}+\sqrt{\frac{2}{\gamma}}\frac{k^{3/4}}{n}\sqrt{1+\frac{4\sqrt{\varepsilon}}{\varphi}}\cdot(\frac{\varepsilon}{\varphi^2})^{1/3}+ \frac{8\sqrt{\varepsilon}}{\varphi}\cdot\frac{k}{\gamma n}$. Then it holds that $\langle f_x, f_y\rangle_{\rm apx}\le \langle f_x,f_y\rangle+\frac{\xi}{n}\le \frac{\sqrt{k}}{n}\cdot (\frac{\varepsilon}{\varphi^2})^{1/6}+\sqrt{\frac{2}{\gamma}}\frac{k^{3/4}}{n}\sqrt{1+\frac{4\sqrt{\varepsilon}}{\varphi}}\cdot(\frac{\varepsilon}{\varphi^2})^{1/3}+ \frac{8\sqrt{\varepsilon}}{\varphi}\cdot\frac{k}{\gamma n}+\frac{\xi}{n}<\frac{\sqrt{\gamma}}{n}\cdot(\frac{1}{2\cdot 10^{3/2}\cdot k^{1/4}\log^{1/2}k}+\frac{1}{2\cdot 10^3\cdot k^{3/4}\log k}+\frac{1}{10^{9/2}\cdot k^{5/4}\log ^{3/2}k}+\frac{1}{1000})<0.027\cdot\frac{\sqrt{\gamma}}{n} <\theta$, since $\frac{\varepsilon}{\varphi^2}< \frac{\gamma^3}{4^3\cdot 10^9\cdot k^{\frac{9}{2}}\cdot \log^3k}$ and $\xi=\frac{\sqrt{\gamma}}{1000}$. Thus, an edge $(u,v)$ will not be added to $H$.
		\end{compactenum}
		
		Therefore, with probability at least $1-\frac{1}{50}-\frac{1}{50}-\frac{1}{n^{50}}\ge 0.95$, the similarity graph $H$ has following properties: (1) all vertices in $V(H)$ (i.e., $S$) are good; (2) all vertices in $S$ that belongs to the same cluster $C_i$ form a connected components, denoted by $S_i$; (3) there is no edge between $S_i$ and $S_j$, $i\ne j$; (4) there are exactly $k$ connected components in $H$, each corresponding to a cluster.
		
		Now we are ready to consider a query  \textsc{WhichCluster}($G, x$). 
		
		
		Assume $x$ is good. We use $C_x$ to denote the cluster that $x$ belongs to. Since all the vertices in $S$ are good, let $y\in C_x\cap S$, so with probability at least $1-\frac{s}{n^{100}}\ge 1-\frac{1}{n^{50}}$, by above analysis, we have $\langle f_x, f_y\rangle_{\rm apx}\ge \langle f_x, f_y\rangle-\frac{\xi}{n}\ge \theta$. On the other hand, for any $y\in S\backslash C_x$, with probability at least $1-\frac{s}{n^{100}}\ge 1-\frac{1}{n^{50}}$, by above analysis, we have $\langle f_x, f_y\rangle_{\rm apx}\le \langle f_x,f_y\rangle+\frac{\xi}{n}<\theta$. Thus, \textsc{WhichCluster}($G, x$) will output the label of $y\in C_x\cap S$ as $x's$ label (at line 3 of Alg.\ref{alg:search}).
		
		Therefore, with probability at least $1-\frac{1}{50}-\frac{1}{50}-\frac{1}{n^{50}}-\frac{n}{n^{50}}\ge 0.95$, all the good vertices will be correctly recovered. So the misclassified vertices come from $B$. We know that 
		$      |B|\le \left(\alpha + \beta + \frac{10^4\varepsilon}{\varphi^2}\right)\cdot n=\left(4\sqrt{k}\cdot \left(\frac{\varepsilon}{\varphi^2}\right)^{1/3}+\frac{10^4\varepsilon}{\varphi^2}\right)\cdot n.$ Since $|C_i|\ge \frac{\gamma n}{k}$, we have $n\le \frac{k}{\gamma}|C_i|$. So, $|B|\le (4\sqrt{k}\cdot (\frac{\varepsilon}{\varphi^2})^{1/3}+\frac{10^4\varepsilon}{\varphi^2})\cdot \frac{k}{\gamma}|C_i|\le O(\frac{k^{\frac{3}{2}}}{\gamma}\cdot (\frac{\varepsilon}{\varphi^2})^{1/3})|C_i|$. 
		This implies that there exists a permutation $\pi:[k]\rightarrow [k]$ such that for all $i\in[k]$, $|U_{\pi(i)}\triangle C_i|\le O(\frac{k^{\frac{3}{2}}}{\gamma}\cdot(\frac{\varepsilon}{\varphi^2})^{1/3})|C_i|$.

		\textbf{Running time.} By Lemma \ref{dot-product}, we know that $\textsc{InitializeOracle}$($G, 1/2, \xi$) computes in time $(\frac{k}{\xi})^{O(1)}\cdot n^{1/2+O(\varepsilon/\varphi^2)}\cdot (\log n)^3\cdot \frac{1}{\varphi^2}$ a sublinear space data structure $\mathcal{D}$ and for every pair of vertices $x,y\in V$, $\textsc{SpectralDotProductOracle}$($G, x, y, 1/2, \xi, \mathcal{D}$) computes an output value $\langle f_x, f_y\rangle_{\rm apx}$ in $(\frac{k}{\xi})^{O(1)}\cdot n^{1/2+O(\varepsilon/\varphi^2)}\cdot (\log n)^2 \cdot \frac{1}{\varphi^2}$ time.
		
		In preprocessing phase, for algorithm \textsc{ConstructOracle}($G, k, \varphi, \varepsilon, \gamma$), it invokes $\textsc{InitializeOracle}$ one time to construct a data structure $\mathcal{D}$ and $\textsc{SpectralDotProductOracle}$ $s\cdot s$ times to construct a similarity graph $H$. So the preprocessing time of our oracle is $(\frac{k}{\xi})^{O(1)}\cdot n^{1/2+O(\varepsilon/\varphi^2)}\cdot (\log n)^3\cdot \frac{1}{\varphi^2}+\frac{100k^2\log^2 k}{\gamma^2}\cdot (\frac{k}{\xi})^{O(1)}\cdot n^{1/2+O(\varepsilon/\varphi^2)}\cdot (\log n)^2 \cdot \frac{1}{\varphi^2}=O(n^{1/2+O(\varepsilon/\varphi^2)}\cdot {\rm poly}(\frac{k\cdot \log n}{\gamma\varphi}))$.
		
		In query phase, to answer the cluster index that $x$ belongs to, algorithm \textsc{WhichCluster}($G, x$) invokes $\textsc{SpectralDotProductOracle}$ $s$ times. So the query time of our oracle is $\frac{10k\log k}{\gamma}\cdot (\frac{k}{\xi})^{O(1)}\cdot n^{1/2+O(\varepsilon/\varphi^2)}\cdot (\log n)^2 \cdot \frac{1}{\varphi^2}=O(n^{1/2+O(\varepsilon/\varphi^2)}\cdot {\rm poly}(\frac{k\cdot \log n}{\gamma\varphi}))$.
		
		\textbf{Space.} By the proof of Lemma \ref{dot-product} in \cite{gluch2021spectral}, we know that overall both algorithm $\textsc{InitializeOracle}$ and $\textsc{SpectralDotProductOracle}$ take $(\frac{k}{\xi})^{O(1)}\cdot n^{1/2+O(\varepsilon/\varphi^2)}\cdot \poly(\log n)$ space. Therefore, overall preprocessing phase takes $(\frac{k}{\xi})^{O(1)}\cdot n^{1/2+O(\varepsilon/\varphi^2)}\cdot \poly(\log n)=O(n^{1/2+O(\varepsilon/\varphi^2)}\cdot \poly (\frac{k\log n}{\gamma}))$ space (at lines 5 and 7 of Alg.\ref{alg:construct}). In query phase, \textsc{WhichCluster} query takes $(\frac{k}{\xi})^{O(1)}\cdot n^{1/2+O(\varepsilon/\varphi^2)}\cdot \poly(\log n)=O(n^{1/2+O(\varepsilon/\varphi^2)}\cdot \poly (\frac{k\log n}{\gamma}))$ space (at line 2 of Alg.\ref{alg:search}).
	\end{proof}

	\section{Experiments}\label{sec:experiments}
	
	To evaluate the performance of our oracle, we conducted experiments on the random graph generated by the Stochastic  Block  Model (SBM). In this model, we are given parameters $p, q$ and $n,k$, where $n,k$ denote the number of vertices and the number of clusters respectively; $p$ denotes the probability that any pair of vertices within each cluster is connected by an edge, and $q$ denotes the probability that any pair of vertices from different clusters is connected by an edge. Setting $\frac{p}{q}>c$ for some big enough constant $c$ we can get a well-clusterable graph. All experiments were implemented in Python 3.9 and the experiments were performed using an Intel(R) Core(TM) i7-12700F CPU @ 2.10GHz processor, with 32 GB RAM.

	\begin{figure}[H]
		\centering     
		\includegraphics[height=4cm,width=11.7cm]{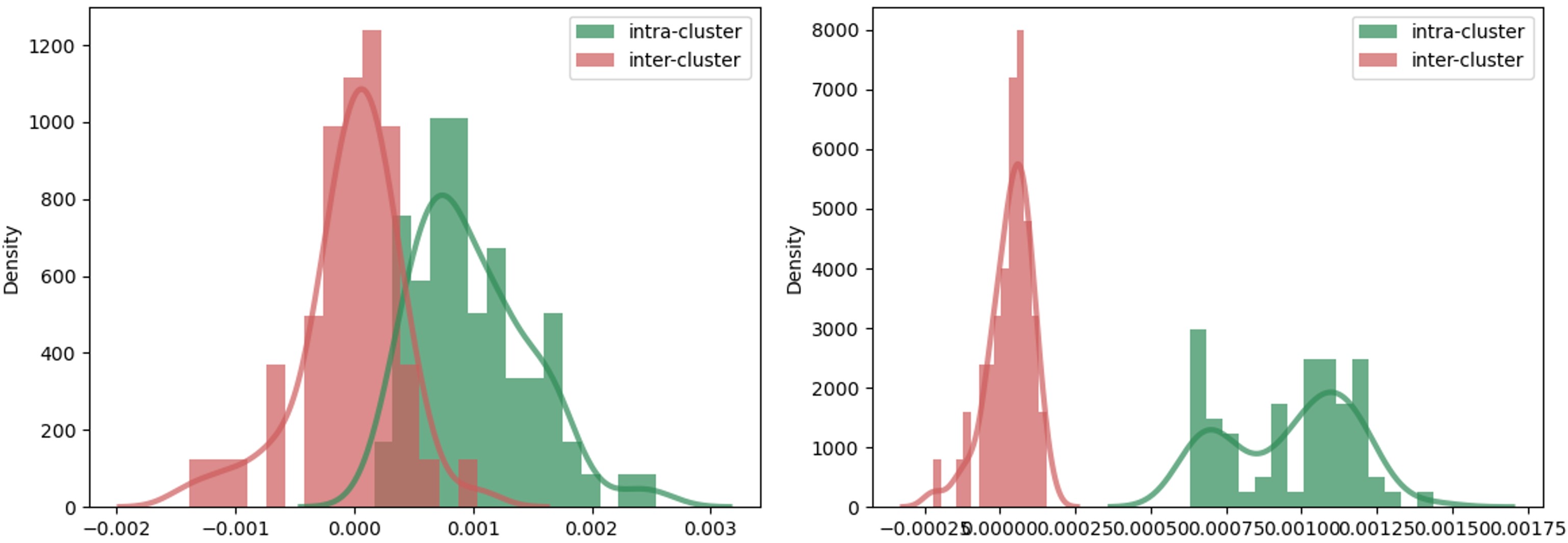}
		\caption{For a random graph $G$ generated by SBM with $n=3000,k=3,p=0.03,q=0.002$, we build the dot product oracle for several different parameters for $t,s,R_{init}, R_{query}$ and plot the density graph. The setting with the most prominent gap in the density graph, i.e., the one on the right, is selected. We can further set $\theta=0.0005$ for $G$ according to the right graph.}
		\label{fg2}
	\end{figure}
	
	\paragraph{Practical changes to our oracle.}
	In order to implement our oracle, we need to make some modifications to the theoretical algorithms. To adapt the dot product oracle parameters (see Algorithm \ref{alg:dot-init} and Algorithm \ref{alg:dot-pro} in Appendix \ref{sec:pseudo}), i.e., $t$ (random walk length), $s$ (sampling set size), and $R_{init}$, $R_{query}$ (number of random walks), we exploit the theoretical gap between intra-cluster and inter-cluster dot products in clusterable graphs. Given a clusterable graph $G$, by constructing the dot product oracle with various parameter settings and calculating some intra-cluster and inter-cluster dot products, we generate density graphs. The setting with the most prominent gap in the density graph is selected (see Figure \ref{fg2} for an illustrative example). 
	
	Determining the appropriate threshold $\theta$ (at lines 2, 8, 9 of Alg.\ref{alg:construct} and line 3 of Alg.\ref{alg:search}) is the next step. By observing the density graph linked to the chosen dot product oracle parameters, we identify the fitting $\theta$ (see Figure \ref{fg2} for an illustrative example).
	
	Determining the appropriate sampling set size $s$ (at line 3 of Alg.\ref{alg:construct}) of our oracle is the final step. Given a graph $G=(V,E)$ generated by SBM, for all vertices in $V$, we know their ground-truth clusters. We can built our clustering oracle for several parameters for $s$. For each parameter setting, we run \textsc{WhichCluster}($G,x$) for some $x\in V$ and check if $x$ was classified correctly. We pick the parameter setting with the most correct answers.

	\paragraph{Misclassification error.}
	To evaluate the misclassification error our oracle, we conducted this experiment. In this experiment, we set $k=3$, $n=3000$, $q=0.002$, $p\in[0.02,0.07]$ in the SBM, where each cluster has $1000$ vertices. For each graph $G=(V,E)$, we run \textsc{WhichCluster}($G,x$) for all $x\in V$ and get a $k$-partition $U_1,\dots,U_k$ of $V$. In experiments, the {\it misclassification error} of our oracle is defined to be $1-\frac{1}{n}\cdot \max_{\pi}{\sum_{i=1}^k {U_{\pi(i)}\cap C_i}}$, where $\pi:[k]\rightarrow [k]$ is a permutation and $C_1,\dots,C_k$ are the ground-truth clusters of $G$.
	
	Moreover, we implemented the oracle in 
	\cite{czumaj2015testing} to compare with our oracle\footnote{We remark that the oracle is implicit in \cite{czumaj2015testing} (see also \cite{peng2020robust}). Instead of using the inner product of spectral embeddings of vertex pairs, the authors of \cite{czumaj2015testing} used the pairwise $\ell_2$-distance between the distributions of two random walks starting from the two corresponding vertices.}. The oracle in \cite{czumaj2015testing} can be seen as a non-robust version of oracle in \cite{peng2020robust}. Note that our primary advancement over \cite{czumaj2015testing} (also \cite{peng2020robust}) is evident in the significantly reduced conductance gap we achieve. 
	
	We did not compare with the oracle in \cite{gluch2021spectral}, since implementing the oracle in \cite{gluch2021spectral} poses challenges. As described in section \ref{subsec:techniques}, the oracle in \cite{gluch2021spectral} initially approximates the $k$ cluster centers by sampling around $O(1/\varepsilon\cdot k^4\log k)$ vertices, and subsequently undertakes the enumeration of approximately $2^{O(1/\varepsilon\cdot k^4\log^2 k)}$ potential $k$-partitions (Algorithm 10 in \cite{gluch2021spectral}). This enumeration process is extremely time-intensive and becomes impractical even for modest values of $k$.
	
	According to the result of our experiment (Table \ref{table-error}), the misclassification error of our oracle is reported to be quite small when $p\ge 0.025$, and even decreases to $0$ when $p\geq 0.035$. The outcomes of our experimentation distinctly demonstrate that our oracle's misclassification error remains notably minimal in instances where the input graph showcases an underlying latent cluster structure. In addition, Table \ref{table-error} also shows that our oracle can handle graphs with a smaller conductance gap than the oracle in \cite{czumaj2015testing}, which is consistent with the theoretical results. This empirical validation reinforces the practical utility and efficacy of our oracle beyond theoretical conjecture.
	
	\begin{table*}[h!]
		\caption{The misclassification error of the oracle in \cite{czumaj2015testing} and our oracle}
		\label{table-error}
		\centering
		
		\begin{threeparttable}
			\begin{tabular}{lllllllllllll}
				\toprule
				$p$   & 0.02     &0.025&0.03&0.035&0.04&0.05&0.06&0.07 \\
				\midrule
				min-error\tnote{1} (\cite{czumaj2015testing})     &-&0.5570&0.1677&0.0363& 0.0173& 0.0010 &0&0\\
				max-error\tnote{1} (\cite{czumaj2015testing})     &-&0.6607&0.6610&0.6533& 0.4510& 0.0773 &0.0227&0.0013\\
				average-error\tnote{1} (\cite{czumaj2015testing})     &-&0.6208&0.4970&0.1996& 0.0829& 0.0168 &0.0030&0.0003\\
				error (our)  &0.2273&0.0113&0.0003&0&0 & 0 &0&0 \\
				\bottomrule
			\end{tabular}
			
			\begin{tablenotes}
				\footnotesize
				\item[1] In the experiment, we found that the misclassification error of oracle in \cite{czumaj2015testing} fluctuated greatly, so for the oracle in \cite{czumaj2015testing}, for each value of $p$, we conducted 30 independent experiments and recorded the minimum error, maximum error and average error of oracle in \cite{czumaj2015testing}.
			\end{tablenotes}
		\end{threeparttable}
	\end{table*}
	
	\paragraph{Query complexity.}
	We conducted an experiment on a SBM graph with $k=3,n=15000,q=0.002,p=0.2$. We calculate the fraction of edges that have been accessed given a number of invocations of \textsc{WhichCluster}($G, x$)  (Table \ref{table-edges}). (Note that there is a trade-off between computational cost and clustering quality. Therefore, it is necessary to point out that the parameters of this experiment are set reasonably and the misclassification error is $0$.) Table \ref{table-edges} shows that as long as the number of \textsc{WhichCluster} queries is not too large, our algorithm only reads a small portion of the input graph. 
	
	The above experiment shows that for a small target misclassification error, our algorithms only require a {\it sublinear amount} of data, which is often critical when analyzing large social networks, since one typically does not have access to the entire network.
	\begin{table*}[!h]
		\caption{The fraction of accessed edges of queries}
		\label{table-edges}
		\centering
		\begin{tabular}{lllllllll}
			\toprule
			\# queries     & 0 & 50 & 100 & 200 & 400 & 800 & 1600 & 3200 \\
			\midrule
			fraction     & 0.1277 & 0.2539 & 0.3637 & 0.5377 & 0.7517 & 0.9273 & 0.9929 & 0.9999 \\
			\bottomrule
		\end{tabular}
	\end{table*}
	
	\paragraph{Running time.}
	To evaluate the running time of our oracle, we conducted this experiment on some random graphs generated by SBM with $n=3000,k=3,q=0.002$ and $p\in[0.02,0.06]$. Note that there is a trade-off between running time and clustering quality. In this experiment, we set the experimental parameters the same as those in the misclassification error experiment, which can ensure a small error. We recorded the running time of constructing a similarity graph $H$ as construct-time. For each $p$, we query all the vertices in the input graph and recorded the average time of the $n=3000$ queries as query-time (Table \ref{table-time}).
	
	\begin{table*}[h!]
		\normalsize
		\caption{The average query time of our oracle}
		\label{table-time}
		
		\centering
		\begin{tabular}{llllllllll}
			\toprule
			$p$   & 0.02 & 0.025 & 0.03 & 0.035 & 0.04 & 0.05 & 0.06 \\
			\midrule
			construct-time (s)  & 11.6533 & 12.4221 & 11.8358 & 11.6097 & 12.2473 & 12.2124 & 12.5851 \\
			query-time (s)     & 0.3504 & 0.4468 & 0.4446 & 0.4638 & 0.4650 & 0.4751 & 0.4874\\
			\bottomrule
		\end{tabular}
	\end{table*}
	
	\paragraph{Robustness experiment.}
	The base graph $G_0=(V,E)$ is generated from SBM with $n=3000,k=3,p=0.05,q=0.002$. Note that randomly deleting some edges in each cluster is equivalent to reducing $p$ and randomly deleting some edges between different clusters is equivalent to reducing $q$. So we consider the worst case. We randomly choose one vertex from each cluster; for each selected vertex $x_i$, we randomly delete delNum edges connected to $x_i$ in cluster $C_i$. If $x_i$ has fewer than delNum neighbors within $C_i$, then we delete all the edges incident to $x_i$ in $C_i$. We run \textsc{WhichCluster} queries for all vertices in $V$ on the resulting graph. We repeated this process for five times for each parameter delNum and recorded the average misclassification error (Table \ref{table-robust}). The results show that our oracle is robust against a few number of random edge deletions. 
	
	\begin{table*}[h!]
		\caption{The average misclassification error with respect to delNum random edge deletions}
		\label{table-robust}
		\centering
		
		\begin{tabular}{llllllllll}
			\toprule
			delNum   
			& 25&32&40&45&50&55&60&65 \\
			\midrule
			error  
			&0.00007&0.00007&0.00013&0.00047& 0.00080&0.00080&0.00080&0.00087 \\
			\bottomrule
		\end{tabular}
	\end{table*}

	\section{Conclusion}\label{sec:conclusion}
	
	We have devised a new spectral clustering oracle with sublinear preprocessing and query time. In comparison to the approach presented in \cite{gluch2021spectral}, our oracle exhibits improved preprocessing efficiency, albeit with a slightly higher misclassification error rate. Furthermore, our oracle can be readily implemented in practical settings, while the clustering oracle proposed in \cite{gluch2021spectral} poses challenges in terms of implementation feasibility. To obtain our oracle, we have established a property regarding the spectral embeddings of the vertices in $V$ for a $d$-bounded $n$-vertex graph $G=(V,E)$ that exhibits a $(k,\varphi,\varepsilon)$-clustering $C_1,\dots,C_k$. Specifically, if $x$ and $y$ belong to the same cluster, the dot product of their spectral embeddings (denoted as $\langle f_x,f_y\rangle$) is approximately $O(\frac{k}{n})$. Conversely, if $x$ and $y$ are from different clusters, $\langle f_x,f_y\rangle$ is close to 0. We also show that our clustering oracle is robust against a few random edge deletions and conducted experiments on synthetic networks to validate our theoretical results.

	\begin{ack}
		The work is supported in part by the Huawei-USTC
		Joint Innovation Project on Fundamental System Software, NSFC grant 62272431 and ``the Fundamental Research Funds for the Central Universities''.
	\end{ack}
	
	

	\newpage
	\appendix
	\begin{center}\huge\bf Appendix \end{center}
	\section{Missing algorithms from Section \ref{sec:preliminaries} and \ref{sec:oracle}}
	\label{sec:pseudo}
	
	\begin{algorithm}[H]
		\DontPrintSemicolon
		\SetAlgoLined
		\caption{\textsc{RunRandomWalks}($G,R,t,x$)}
		
		Run $R$ random walks of length $t$ starting from $x$\;
		Let $\widehat{m}_x(y)$ be the fraction of random walks that ends at $y$\;
		return $\widehat{m}_x$
	\end{algorithm}
	
	\begin{algorithm}[H]
		\DontPrintSemicolon
		\SetAlgoLined
		\caption{\textsc{EstimateTransitionMatrix}($G,I_S,R,t$)}
		
		\For{each sample $x\in I_S$}{
			$\widehat{m}_x:=$\textsc{RunRandomWalks}($G,R,t,x$)
		}
		Let $\widehat{Q}$ be the matrix whose columns are $\widehat{m}_x$ for $x\in I_S$\;
		return $\widehat{Q}$
	\end{algorithm}
	
	\begin{algorithm}[H]
		\DontPrintSemicolon
		\SetAlgoLined
		\caption{\textsc{EstimateCollisionProbabilities}($G,I_S,R,t$)}
		
		\For{$i=1$ to $O(\log n)$}{
			$\widehat{Q}_i:=$\textsc{EstimateTransitionMatrix}($G,I_S,R,t$)\;
			$\widehat{P}_i:=$\textsc{EstimateTransitionMatrix}($G,I_S,R,t$)\;
			$\mathcal{G}_i:=\frac{1}{2}(\widehat{P}_i^T\widehat{Q}_i+\widehat{Q}_i^T\widehat{P}_i)$
		}
		Let $\mathcal{G}$ be a matrix obtained by taking the entrywise median of $\mathcal{G}_i's$\;
		return $\mathcal{G}$
	\end{algorithm}

	\begin{algorithm}[H]
		\label{alg:dot-init}
		\DontPrintSemicolon
		\SetAlgoLined
		\caption{\textsc{InitializeOracle($G,\delta,\xi$)}  \CommentSty{\quad Need:$\varepsilon/\varphi^2 \leq \frac{1}{10^5}$} }
		
		$t:=\frac{20\cdot \log n}{\varphi^2}$\;
		$R_{\rm init}:=O(n^{1-\delta+980\cdot \varepsilon/\varphi^2}\cdot k^{17}/\xi^2)$\;
		$s:=O(n^{480\cdot \varepsilon/\varphi^2}\cdot \log n \cdot k^8/\xi^2)$\;
		Let $I_S$ be the multiset of $s$ indices chosen independently and uniformly at random from $\{1,\dots,n\}$\;
		\For{$i=1$ to $O(\log n)$}{
			$\widehat{Q}_i:=$\textsc{EstimateTransitionMatrix}($G,I_S,R_{\rm init},t$)
		}
		$\mathcal{G}:=$\textsc{EstimateCollisionProbabilities}($G,I_S,R_{\rm init},t$)\;
		Let $\frac{n}{s}\cdot \mathcal{G}:=\widehat{W}\widehat{\Sigma}\widehat{W}^T$ be the eigendecomposition of $\frac{n}{s}\cdot \mathcal{G}$\;
		\If{$\widehat{\Sigma}^{-1}$ exists}{
			$\Psi:=\frac{n}{s}\cdot \widehat{W}_{[k]}\widehat{\Sigma}_{[k]}^{-2}\widehat{W}^T_{[k]}$\;
			return $\mathcal{D}:=\{\Psi,\widehat{Q}_1,\dots,\widehat{Q}_{O(\log n)}\}$
		}
	\end{algorithm}
	
	\begin{algorithm}[H]
		\label{alg:dot-pro}
		\DontPrintSemicolon
		\SetAlgoLined
		\caption{\textsc{SpectralDotProductOracle}($G,x,y,\delta,\xi,\mathcal{D}$)\CommentSty{\quad Need:$\varepsilon/\varphi^2 \leq \frac{1}{10^5}$}}
		
		$R_{\rm query}:=O(n^{\delta+500\cdot \varepsilon/\varphi^2}\cdot k^9/\xi^2)$\;
		\For{$i=1$ to $O(\log n)$}{
			$\widehat{m}_x^i:=$\textsc{RunRandomWalks}($G,R_{\rm query},t,x$)\;
			$\widehat{m}_y^i:=$\textsc{RunRandomWalks}($G,R_{\rm query},t,y$)
		}
		Let $\alpha_x$ be a vector obtained by taking the entrywise median of $(\widehat{Q}_i)^T(\widehat{m}_x^i)$ over all runs\;
		Let $\alpha_y$ be a vector obtained by taking the entrywise median of $(\widehat{Q}_i)^T(\widehat{m}_y^i)$ over all runs\;
		return $\langle f_x,f_y\rangle_{\rm apx}:=\alpha_x^T\Psi\alpha_y$
	\end{algorithm}

	\section{Formal statement of Theorem \ref{thm:robust-result} and proof}\label{sec:theorem2}
	
	\begin{theorem*}[Formal; Robust against random edge deletions]
		\label{thm:formal-robust}
		Let $k\ge 2$ be an integer, $\varphi\in (0,1)$. Let $G_0=(V,E_0)$ be a $d$-regular $n$-vertex graph that admits a $(k,\varphi,\varepsilon)$-clustering $C_1,\dots,C_k$, $\frac{\varepsilon}{\varphi^4}\ll \frac{\gamma^3}{k^{\frac{9}{2}}\cdot \log^3k}$ and for all $i\in[k]$, $\gamma\frac{n}{k}\le |C_i|\le \frac{n}{\gamma k}$, where $\gamma$ is a constant that is in $(0.001,1]$.
		\begin{enumerate}[1.]
			\item Let $G$ be a graph obtained from $G_0$ by deleting at most $c$ edges in each cluster, where $c$ is a constant. If $0\le c\le \frac{d\varphi^2}{2\sqrt{10}}$, then there exists an algorithm that has query access to the adjacency list of $G$ and constructs a clustering oracle in $O(n^{1/2+O(\varepsilon/\varphi^2)}\cdot {\rm poly}(\frac{k\log n}{\gamma \varphi}))$ preprocessing time and takes $O(n^{1/2+O(\varepsilon/\varphi^2)}\cdot {\rm poly}(\frac{k\log n}{\gamma}))$ space. Furthermore, with probability at least $0.95$, the following hold:
			
			\begin{enumerate}[1).]
				\item Using the oracle, the algorithm can answer any \textsc{WhichCluster} query in $O(n^{1/2+O(\varepsilon/\varphi^2)}\cdot {\rm poly}(\frac{k\log n}{\gamma \varphi}))$ time and a \textsc{WhichCluster} query takes $O(n^{1/2+O(\varepsilon/\varphi^2)}\cdot {\rm poly}(\frac{k\log n}{\gamma}))$ space.
				
				\item Let $U_i:=\{x\in V: \textsc{WhichCluster}(G,x) = i\}, i\in[k]$ be the clusters recovered by the algorithm. There exists a permutation $\pi:[k]\rightarrow [k]$ such that for all $i\in[k]$, $|U_{\pi(i)}\triangle C_i|\le O(\frac{k^{\frac{3}{2}}}{\gamma}\cdot(\frac{\varepsilon}{\varphi^4})^{1/3})|C_i|$.
			\end{enumerate}
			\item Let $G$ be a graph obtained from $G_0$ by randomly deleting at most $O(\frac{kd^2}{\log k+d})$ edges in $G_0$. With probability at least $1-
			\frac{1}{k^2}$, then there exists an algorithm that has query access to the adjacency list of $G$ and constructs a clustering oracle in $O(n^{1/2+O(\varepsilon/\varphi^2)}\cdot {\rm poly}(\frac{k\log n}{\gamma \varphi}))$ preprocessing time and takes $O(n^{1/2+O(\varepsilon/\varphi^2)}\cdot {\rm poly}(\frac{k\log n}{\gamma}))$ space. Furthermore, with probability at least $0.95$, the following hold:
			\begin{enumerate}[1).]
				\item Using the oracle, the algorithm can answer any \textsc{WhichCluster} query in $O(n^{1/2+O(\varepsilon/\varphi^2)}\cdot {\rm poly}(\frac{k\log n}{\gamma \varphi}))$ time and a \textsc{WhichCluster} query takes $O(n^{1/2+O(\varepsilon/\varphi^2)}\cdot {\rm poly}(\frac{k\log n}{\gamma}))$ space.
				\item Let $U_i:=\{x\in V: \textsc{WhichCluster}(G,x) = i\}, i\in[k]$ be the clusters recovered by the algorithm. There exists a permutation $\pi:[k]\rightarrow [k]$ such that for all $i\in[k]$, $|U_{\pi(i)}\triangle C_i|\le O(\frac{k^{\frac{3}{2}}}{\gamma}\cdot(\frac{\varepsilon}{\varphi^4})^{1/3})|C_i|$.
			\end{enumerate}
		\end{enumerate}
	\end{theorem*}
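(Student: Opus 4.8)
The plan is to \emph{reduce Theorem~2 to Theorem~\ref{thm:main-result}}: I will show that the perturbed graph $G$, after re-regularizing with self-loops as in Section~\ref{sec:preliminaries}, still admits a $(k,\varphi',\varepsilon')$-clustering with the \emph{same} partition $C_1,\dots,C_k$, where $\varphi'\ge\varphi/2$ and $\varepsilon'\le\varepsilon$, and then invoke Theorem~\ref{thm:main-result} on $G$ verbatim. Throughout, the only thing that needs care is the inner conductance of each $C_i$, since deletions never create cross-cluster edges and re-padding with self-loops does not change the denominator $d|C_i|$, so $\phi_{\rm out}^G(C_i,V)\le\phi_{\rm out}^{G_0}(C_i,V)\le\varepsilon$ in all cases (cross-cluster deletions in Part~2 only decrease it further), and the balance condition $|C_i|/|C_j|\in O(1)$ is untouched.

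For Part~1 (at most $c$ adversarial deletions inside each cluster), fix a cut $(S,C_i\setminus S)$ with $1\le|S|\le|C_i|/2$. Since $\phi_{\rm in}(C_i)\ge\varphi$ in $G_0$ we have $|E_{G_0}(S,C_i\setminus S)|\ge\varphi d|S|$, and removing at most $c$ edges from inside $C_i$ removes at most $c$ of these, so $|E_G(S,C_i\setminus S)|\ge\varphi d|S|-c$. Because $c\le\frac{d\varphi^2}{2\sqrt{10}}\le\frac{d\varphi}{2}\le\frac{d\varphi}{2}|S|$ (using $\varphi<1$ and $|S|\ge1$), this is at least $\tfrac12\varphi d|S|$, hence $\phi_{\rm in}^G(C_i)\ge\varphi/2$ and $G$ is $(k,\varphi/2,\varepsilon)$-clusterable. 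Since $\varphi<1$, the hypothesis $\varepsilon/\varphi^4\ll\gamma^3/(k^{9/2}\log^3k)$ implies the hypothesis of Theorem~\ref{thm:main-result} for parameters $(k,\varphi/2,\varepsilon)$, which then yields the stated preprocessing, query and space bounds ($\varphi/2$ in place of $\varphi$ only changes constants in the $O(\cdot)$), and misclassification error $O\!\big(\tfrac{k^{3/2}}{\gamma}(\varepsilon/(\varphi/2)^2)^{1/3}\big)|C_i|=O\!\big(\tfrac{k^{3/2}}{\gamma}(\varepsilon/\varphi^2)^{1/3}\big)|C_i|\le O\!\big(\tfrac{k^{3/2}}{\gamma}(\varepsilon/\varphi^4)^{1/3}\big)|C_i|$. (A softer route giving $\varphi'=\Omega(\varphi^2)$ via Cheeger's inequality plus spectral perturbation of $L_{G_0[C_i]}$, which is exactly why the statement carries $\varphi^4$, would also suffice; the combinatorial bound above is cleaner and sharper.)

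For Part~2 (random deletions) I would argue that each cluster \emph{individually} stays an expander with high probability. Conditioning on the deleted set, the edges falling inside a fixed $C_i$ form a uniform random subset of $E_{G_0}(C_i)$ of some size $t_i\le M$, and since $G_0$ is $d$-regular with balanced clusters, $|E_{G_0}(C_i)|=\Theta(d|C_i|)$ is a $\Theta(1/k)$ fraction of the $dn/2$ edges. The core claim is: removing $t_i$ uniformly random edges from $G_0[C_i]$ leaves $\phi_{\rm in}(C_i)\ge\varphi/2$ with probability $\ge 1-1/k^3$ whenever $M=O(\tfrac{kd^2}{\log k+d})$. I would prove this by a union bound over all cuts $(S,C_i\setminus S)$: such a cut has at least $\varphi d|S|$ and at most $d|S|$ boundary edges (the upper bound is where $d$-boundedness enters), so a single random removal lands on its boundary with probability $O(|S|/|C_i|)$, and the probability that more than half of the $\ge\varphi d|S|$ boundary edges are removed is at most $\binom{t_i}{\varphi d|S|/2}(O(|S|/|C_i|))^{\varphi d|S|/2}$, which decays geometrically in $|S|$ fast enough to dominate the $\binom{|C_i|}{|S|}$ choices of $S$ for the stated $M$. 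A union bound over the $k$ clusters gives total failure probability $\le 1/k^2$; on the complementary event $G$ is again $(k,\varphi/2,\varepsilon)$-clusterable, and Theorem~\ref{thm:main-result} finishes exactly as in Part~1.

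The main obstacle is pushing this last union bound, over all (exponentially many) cuts within every cluster, through with the precise budget $M=O(\tfrac{kd^2}{\log k+d})$. One must exploit $d$-boundedness to cap how many random removals can damage a given cut, and — crucially — handle the smallest cuts separately: for a single vertex or an $O(1)$-sized set $S$, an interior vertex has full degree $d$ (so $\Omega(d)$ removals are needed to harm it), while only $O(\varepsilon d|C_i|)$ boundary vertices can have degree as low as $\varphi d$, so the expensive union over tiny sets is tamed by both a larger exponent and a much smaller set count; the tail estimates then have to be balanced simultaneously against $1/k^2$ and the cut-entropy $\log\binom{|C_i|}{|S|}$. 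Everything else — the outer-conductance bookkeeping, the self-loop re-regularization, and the black-box appeal to Theorem~\ref{thm:main-result} — is routine.
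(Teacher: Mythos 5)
Your Part~1 is correct but takes a genuinely different route from the paper. You bound the inner conductance combinatorially: every cut $(S,C_i\setminus S)$ loses at most $c$ of its at least $\varphi d|S|$ crossing edges while the denominator $d|S|$ is unchanged, so $\phi_{\rm in}^G(C_i)\ge \varphi/2$. The paper instead re-regularizes each cluster, writes each deletion as a rank-bounded perturbation of the normalized Laplacian with spectral norm $\le\frac{\sqrt{10}}{2d}$, and combines Weyl's inequality with Cheeger's inequality to get $\phi_{\rm in}^G(C_i)\ge\frac{1}{8}\varphi^2$. Your bound is sharper (it would let one write $(\varepsilon/\varphi^2)^{1/3}$ in the conclusion rather than $(\varepsilon/\varphi^4)^{1/3}$, and it trivially implies the stated hypothesis of Theorem~\ref{thm:main-result}), and you correctly identify that the $\varphi^4$ in the theorem statement is an artifact of the spectral route. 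The outer-conductance and balance bookkeeping, and the black-box appeal to Theorem~\ref{thm:main-result} with $\varphi'=\varphi/2$, are all fine.

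Part~2, however, has a genuine gap, and it is one you half-acknowledge. You try to prove the strong statement that each cluster remains an expander after receiving up to $t_i\le M=O(\frac{kd^2}{\log k+d})$ random internal deletions, via a union bound over all cuts of $G_0[C_i]$. You never carry this out, and with the stated budget it is in serious trouble for small cuts and small $\varphi d$: e.g.\ for singleton cuts with only $\Theta(\varphi d)=O(1)$ boundary edges, the per-vertex probability of losing half of them under $t_i=\Theta(kd^2/(\log k+d))$ random deletions is roughly $(t_i k/n)^{\varphi d/2}$, and summing over the $\Theta(n/k)$ vertices of a cluster (let alone all larger cuts and all $k$ clusters) does not yield a failure probability bounded by $1/k^2$ uniformly in the parameters. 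The missing idea is that none of this is needed: the intended argument is a two-step reduction. First, since $\frac{1}{k}(\gamma-\frac{\varepsilon}{\gamma})\le|E(C_i)|/|E_0|\le\frac{1}{\gamma k}$, a Chernoff bound on the number $X_i$ of deletions landing in $C_i$ shows that with the budget $M=\frac{kc^2\gamma(\gamma^2-\varepsilon)}{9\log k+2(\gamma^2-\varepsilon)c}$ one has $\Pr[X_i>c]\le 1/k^3$, so by a union bound over the $k$ clusters, with probability $\ge 1-1/k^2$ \emph{every} cluster loses at most $c$ edges. Second, on that event the deterministic Part~1 applies verbatim. Replacing your cut-by-cut union bound with this per-cluster edge-count concentration closes the gap.
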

	
	To prove Theorem \ref{thm:robust-result}, we need the following lemmas.
	\begin{lemma}[Cheeger's Inequality]
		\label{lemma-cheeger}
		In holds for any graph $G$ that
		\begin{linenomath*}
			\[
			\frac{\lambda_2}{2}\le \phi(G)\le \sqrt{2\lambda_2}.
			\]
		\end{linenomath*}   
	\end{lemma}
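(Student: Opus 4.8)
The plan is to prove the two bounds separately; both are classical, so I will only indicate the test objects and the key estimates. Throughout I use that for a $d$-regular graph $L=I-\tfrac{A}{d}$ has bottom eigenvector $u_1=\mathbf{1}/\sqrt{n}$ with $\lambda_1=0$, that $\lambda_2=\min_{x\perp \mathbf{1}}\frac{x^\top L x}{\Vert x\Vert_2^2}$, and that $x^\top L x=\frac1d\sum_{(i,j)\in E}(x(i)-x(j))^2$ (self-loops, if present, contribute $0$).

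\textbf{Easy direction} ($\lambda_2/2\le \phi(G)$). Let $S$ with $0<|S|\le n/2$ attain $\phi(G)=\frac{|E(S,V\setminus S)|}{d|S|}$, and take the test vector $x=\mathbbm{1}_S-\frac{|S|}{n}\mathbf{1}$, which is orthogonal to $\mathbf{1}$. Only edges crossing the cut contribute to $\sum_{(i,j)\in E}(x(i)-x(j))^2=|E(S,V\setminus S)|$, while $\Vert x\Vert_2^2=|S|\cdot\frac{n-|S|}{n}\ge |S|/2$ since $|S|\le n/2$. Plugging into the variational characterization of $\lambda_2$ gives $\lambda_2\le \frac{|E(S,V\setminus S)|/d}{|S|/2}=2\phi(G)$.

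\textbf{Hard direction} ($\phi(G)\le\sqrt{2\lambda_2}$). I would first establish a \emph{sweep-cut lemma}: for any $g\colon V\to\mathbb{R}_{\ge 0}$ with $|\mathrm{supp}(g)|\le n/2$, one of the threshold sets $S_\tau=\{v:g(v)^2>\tau\}$ satisfies $\phi_{\rm out}(S_\tau,V)\le\sqrt{2\cdot\frac{\sum_{(i,j)\in E}(g(i)-g(j))^2}{d\sum_v g(v)^2}}$. This follows by picking $\tau$ uniformly in $[0,\max_v g(v)^2]$, computing $\mathrm{E}\,|E(S_\tau,V\setminus S_\tau)|=\sum_{(i,j)\in E}|g(i)^2-g(j)^2|$ and $\mathrm{E}\big[d|S_\tau|\big]=d\sum_v g(v)^2$, bounding the numerator by Cauchy--Schwarz as $\sum_{(i,j)\in E}|g(i)-g(j)|\,|g(i)+g(j)|\le\sqrt{\sum_{(i,j)\in E}(g(i)-g(j))^2}\cdot\sqrt{2d\sum_v g(v)^2}$, and selecting the threshold that does at least as well as the ratio of expectations. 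Then I apply this to the second eigenvector: set $f=u_2$, let $c$ be a median of $\{f(v)\}_{v\in V}$, and define $g^{+}=(f-c\mathbf{1})^{+}$, $g^{-}=(f-c\mathbf{1})^{-}$, each of support at most $n/2$. Using $f\perp\mathbf{1}$ gives $\Vert g^{+}\Vert_2^2+\Vert g^{-}\Vert_2^2=\Vert f-c\mathbf{1}\Vert_2^2\ge\Vert f\Vert_2^2$; an edge-by-edge check (case analysis on the signs of $f(i)-c$ and $f(j)-c$) gives $\sum_{(i,j)\in E}\big[(g^{+}(i)-g^{+}(j))^2+(g^{-}(i)-g^{-}(j))^2\big]\le\sum_{(i,j)\in E}(f(i)-f(j))^2=d\lambda_2\Vert f\Vert_2^2$; combining these two facts, the Rayleigh-type ratio appearing in the sweep-cut lemma is at most $\lambda_2$ for at least one of $g^{+},g^{-}$, and the corresponding sweep cut witnesses $\phi(G)\le\sqrt{2\lambda_2}$.

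\textbf{Main obstacle.} The delicate steps are entirely in the hard direction: the per-edge inequality $(g^{+}(i)-g^{+}(j))^2+(g^{-}(i)-g^{-}(j))^2\le(f(i)-f(j))^2$, and the Cauchy--Schwarz estimate together with the averaging argument that converts the bound on expectations over a random threshold into the existence of one good deterministic sweep cut. The easy direction is immediate from the Courant--Fischer characterization, and the whole proof is standard; in the paper it may well be stated just as a cited fact.
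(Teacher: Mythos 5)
Your sketch is the standard proof of the discrete Cheeger inequality and it is correct; the paper itself gives no proof of Lemma~\ref{lemma-cheeger}, stating it as a classical black-box fact (your closing guess is exactly right), so there is no in-paper argument to compare against. All the key steps check out: the test vector $\mathbbm{1}_S-\frac{|S|}{n}\mathbf{1}$ with $\Vert x\Vert_2^2=\frac{|S|(n-|S|)}{n}\ge |S|/2$ gives the easy direction; the per-edge inequality $(g^{+}(i)-g^{+}(j))^2+(g^{-}(i)-g^{-}(j))^2\le (f(i)-f(j))^2$ holds by the sign case analysis (when $f(i)-c$ and $f(j)-c$ have opposite signs the cross term $-2ab$ is nonnegative); and the mediant inequality converts the two aggregate bounds into a good Rayleigh quotient for one of $g^{+},g^{-}$. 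Two cosmetic points if you were to write this out in full: your stated expectations $\mathrm{E}\,|E(S_\tau,V\setminus S_\tau)|$ and $\mathrm{E}[d|S_\tau|]$ are each missing the normalization by $\max_v g(v)^2$, which of course cancels in the ratio; and you should note that at least one of $g^{+},g^{-}$ is nonzero (since $u_2\perp\mathbf{1}$ is nonconstant) and apply the averaging argument to that one, discarding the other if it vanishes identically.
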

	
	\begin{lemma}[Weyl's Inequality]
		\label{lemma-weyl}
		Let $A,B\in \mathbb{R}^{n\times n}$ be symmetric matrices. Let $\alpha_1,\dots,\alpha_n$ and $\beta_1,\dots,\beta_n$ be the eigenvalues of $A$ and $B$ respectively. Then for any $i\in[n]$,  we have
		\begin{linenomath*}
			\[
			\left| \alpha_i-\beta_i\right|\le \Vert A-B\Vert,
			\]
		\end{linenomath*}
		where $\Vert A-B\Vert$ is the spectral norm of $A-B$.
	\end{lemma}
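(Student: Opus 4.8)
\textbf{Proof proposal for Lemma \ref{lemma-weyl} (Weyl's Inequality).}

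The plan is to deduce the bound from the Courant--Fischer variational (min--max) characterisation of the eigenvalues of a symmetric matrix. Adopt the increasing ordering $\alpha_1\le\cdots\le\alpha_n$ and $\beta_1\le\cdots\le\beta_n$, consistent with the convention used for $L$ in Section \ref{sec:preliminaries}. Recall that for a symmetric matrix $M\in\mathbb{R}^{n\times n}$ with eigenvalues $\mu_1\le\cdots\le\mu_n$, Courant--Fischer gives, for each $i\in[n]$,
\begin{linenomath*}
\[
\mu_i=\min_{\substack{S\subseteq\mathbb{R}^n\\ \dim S=i}}\ \max_{\substack{x\in S\\ \|x\|_2=1}} x^T M x.
\]
\end{linenomath*}
The first step is to record the elementary fact that, since $A-B$ is symmetric, its spectral norm equals $\max_{\|x\|_2=1}|x^T(A-B)x|$; in particular $x^T(A-B)x\le\|A-B\|$ for every unit vector $x$.

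The second step is the one-sided bound. Fix $i\in[n]$ and let $S^\star$ be the $i$-dimensional subspace spanned by eigenvectors of $B$ associated with $\beta_1,\dots,\beta_i$, so that $\max_{x\in S^\star,\ \|x\|_2=1}x^T B x=\beta_i$. Writing $A=B+(A-B)$, for any unit vector $x\in S^\star$ we have $x^T A x= x^T B x+x^T(A-B)x\le \beta_i+\|A-B\|$. Plugging the subspace $S^\star$ into the outer minimisation of the Courant--Fischer formula for $A$ yields
\begin{linenomath*}
\[
\alpha_i=\min_{\dim S=i}\ \max_{\substack{x\in S\\ \|x\|_2=1}} x^T A x\ \le\ \max_{\substack{x\in S^\star\\ \|x\|_2=1}} x^T A x\ \le\ \beta_i+\|A-B\|.
\]
\end{linenomath*}

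The third step is symmetry: repeating the same argument with the roles of $A$ and $B$ exchanged, and using $\|B-A\|=\|A-B\|$, gives $\beta_i\le\alpha_i+\|A-B\|$. Combining the two inequalities yields $|\alpha_i-\beta_i|\le\|A-B\|$, as claimed. There is no real obstacle here; the only points requiring a little care are keeping the min--max ordering convention straight (so that the subspace spanned by the \emph{bottom} $i$ eigenvectors of $B$ is the correct test space) and invoking the identification of the spectral norm of the symmetric perturbation $A-B$ with the maximum absolute value of its Rayleigh quotient.
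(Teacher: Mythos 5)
Your proof is correct and is the standard variational argument for Weyl's eigenvalue perturbation inequality: use the Courant--Fischer min--max characterisation, test with the subspace spanned by the bottom $i$ eigenvectors of $B$, and then symmetrise. The paper itself states Lemma~\ref{lemma-weyl} as a classical fact and gives no proof, so there is nothing to compare against; your argument supplies the missing justification cleanly, and in particular you correctly keep the increasing ordering of eigenvalues consistent between $A$ and $B$, which is the only place the argument could go wrong.
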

	
	\begin{proof}[Proof of Theorem \ref{thm:robust-result}]
		\textbf{Proof of item 1:}
		For any $d$-bounded graph $G'=(V,E)$, we can get a $d$-regular graph $G$ from $G'$ by adding $d-\deg(x)$ self-loops with weight $1/2$ to each vertex $x\in V$. Then according to \cite{chung1997spectral}, the normalized Laplacian of $G$ (denoted as $L$) satisfies
		\begin{linenomath*}
			\[
			L(x,y)=\begin{cases}
				1-\frac{w_{self}(x)}{d} & \mbox{if}\quad  x=y\\
				-\frac{1}{d} & \mbox{if} \quad x\ne y\quad \mbox{and}\quad(x,y)\in E\\
				0 & \mbox{otherwise}
			\end{cases}
			.\]
		\end{linenomath*}
		
		Let $G_0=(V,E_0)$ be a $d$-regular graph that admits a $(k,\varphi,\varepsilon)$-clustering $C_1,\dots,C_k$.
		Now we consider a cluster $C_i,i\in[k]$. Let $C_i^0$ be a $d$-regular graph obtained by adding $d-deg_i(x)$ self-loops to each vertex $x\in C_i$, where $deg_i(x)$ is the degree of vertex $x$ in the subgraph $C_i$, $i\in[k]$. Let $C_i^j$ be a graph obtained from $C_i^{j-1}$ by: (1) randomly deleting one edge $(u,v)\in E(C_i^{j-1})$, where $E(C_i^{j-1})$ is a set of edges that have both endpoints in $C_i^{j-1}$; (2) turning the result subgraph in (1) be a $d$-regular graph, $i\in[k],j\in[c]$. Let $L_i^j$ be the normalized Laplacian of $C_i^j, i\in[k],j=0,1,\dots,c$. Let $H_i^j=L_i^j-L_i^{j-1},i\in[k],j\in[c]$. Then if $u\ne v$, we have
		\begin{linenomath*}
			\[
			H_i^j(x,y)=\begin{cases}
				\frac{1}{d} & \mbox{if}\quad x=u, y=v\quad \mbox{or}\quad x=v, y=u\\
				-\frac{1}{2d} & \mbox{if}\quad x=y=u\quad\mbox{or}\quad x=y=v\\
				0 & \mbox{otherwise}
			\end{cases}
			,
			\]
		\end{linenomath*}
		and if $u=v$, $H_i^j$ is a all-zero matrix. Consider the fact that for a symmetric matrix, the spectral norm is less than or equal to its Frobenius norm, we will have $\Vert H_i^j\Vert\le\Vert H_i^j\Vert_F=\sqrt{2\cdot \frac{1}{d^2}+2\cdot \frac{1}{4d^2}}=\sqrt{\frac{5}{2d^2}}=\frac{\sqrt{10}}{2d}$ for all $i\in[k],j\in[c]$. Let $H_i=\sum_{j=1}^c{H_i^j}=L_i^c-L_i^0$, we have $\Vert H_i\Vert\le\frac{\sqrt{10}}{2d}\cdot c,i\in[k]$. Let $\lambda_2(L_i^0)$ and $\lambda_2(L_i^c)$ be the second smallest eigenvalue of $L_i^0$ and $L_i^c$ respectively. By Lemma \ref{lemma-weyl}, we have $|\lambda_2(L_i^c)-\lambda_2(L_i^0)|\le \Vert H_i\Vert\le\frac{\sqrt{10}}{2d}\cdot c$, which gives us $\lambda_2(L_i^c)\ge \lambda_2(L_i^0)-\frac{\sqrt{10}}{2d}\cdot c$, $i\in[k]$. By Lemma \ref{lemma-cheeger} and the precondition that $c\le \frac{d\varphi^2}{2\sqrt{10}}$,  we have $\lambda_2(L_i^0)\ge \frac{\varphi^2}{2}\ge \frac{\sqrt{10}}{d}\cdot c$. Therefore,
		\begin{linenomath}
			\begin{align*}
				\lambda_2(L_i^c)&\ge \lambda_2(L_i^0)-\frac{\sqrt{10}}{2d}\cdot c\\
				&= \frac{1}{2}\lambda_2(L_i^0)+\frac{1}{2}\lambda_2(L_i^0)-\frac{\sqrt{10}}{2d}\cdot c\\
				&\ge \frac{1}{2}\lambda_2(L_i^0)+\frac{\sqrt{10}}{2d}\cdot c-\frac{\sqrt{10}}{2d}\cdot c\\
				&=\frac{1}{2}\lambda_2(L_i^0).
			\end{align*}
		\end{linenomath}
		Again by Lemma \ref{lemma-cheeger}, for graph $C_i^c$, we have $\phi_{\rm in}(C_i^c)=\phi(C_i^c)\ge \frac{1}{2}\lambda_2(L_i^c)\ge \frac{1}{4}\lambda_2(L_i^0)\ge \frac{1}{8}\varphi^2$,$i\in[k]$. Note that we slightly abuse the notion $C_i^c$, for $\phi(C_i^c)$, we treat $C_i^c$ as a $d$-regular graph, and for $\phi_{\rm in}(C_i^c)$, we treat $C_i^c$ as a cluster obtained by deleting $c$ edges from $E(C_i)$. So, for a $(k,\varphi,\varepsilon)$-clusterable graph $G_0=(V,E_0)$, if we delete at most $c\le \frac{d\varphi^2}{2\sqrt{10}}$ edges in each cluster, then the resulting graph $G$ is $(k,\frac{1}{8}\varphi^2,\varepsilon)$-clusterable. Since $\frac{\varepsilon}{\varphi^4}\ll \frac{\gamma^3}{k^{\frac{9}{2}}\cdot \log^3k}$, we have $\frac{\varepsilon}{(\frac{1}{8}\varphi^2)^2}\ll\frac{\gamma^3}{k^{\frac{9}{2}}\cdot \log^3k}$. 
		The statement of item $1$ in this theorem follows from the same augments as those in the proof of Theorem \ref{thm:main-result} with parameter $\varphi'=\frac{1}{8}\varphi^2$ in $G$.
		
		\textbf{Proof of item 2:} Let $c= \frac{d\varphi^2}{2\sqrt{10}}$. Since $|C_i|\le \frac{n}{\gamma k}$ for all $i\in[k]$, we have $|E(C_i)|\le \frac{n}{\gamma k}\cdot \frac{d}{2}$, where $E(C_i)$ is a set of edges that have both endpoints in $C_i$. So $\frac{|E(C_i)|}{|E_0|}\le \frac{\frac{n}{\gamma k}\cdot \frac{d}{2}}{\frac{nd}{2}}=\frac{1}{\gamma k}$,$i\in[k]$. Since $|C_i|\ge \frac{\gamma n}{k}$ and $\phi_{\rm out}(C_i,V)\le \varepsilon$, we have $E(C_i)\ge \frac{nd}{2k}(\gamma-\frac{\varepsilon}{\gamma})$. So $\frac{|E(C_i)|}{|E_0|}\ge \frac{\frac{nd}{2k}(\gamma-\frac{\varepsilon}{\gamma})}{\frac{nd}{2}}=\frac{1}{k}(\gamma-\frac{\varepsilon}{\gamma})$,$i\in[k]$. Combining the above two results, we have $\frac{1}{k}(\gamma-\frac{\varepsilon}{\gamma})\le\frac{|E(C_i)|}{|E_0|}\le \frac{1}{\gamma k} $.
		
		In the following, we use $X_i$ to denote the number of edges that are deleted from $E(C_i)$. If we randomly delete $\frac{kc^2\gamma(\gamma^2-\varepsilon)}{9\log k+2(\gamma^2-\varepsilon)c}=O(\frac{kd^2}{\log k+d})$ edges from $G_0$, then we have $\frac{1}{k}(\gamma-\frac{\varepsilon}{\gamma})\cdot \frac{kc^2\gamma(\gamma^2-\varepsilon)}{9\log k+2(\gamma^2-\varepsilon)c}\le \E(X_i)\le \frac{1}{\gamma k}\cdot\frac{kc^2\gamma(\gamma^2-\varepsilon)}{9\log k+2(\gamma^2-\varepsilon)c}$, which gives us 
		\begin{linenomath*}
			\[
			\frac{c^2(\gamma^2-\varepsilon)^2}{9\log k+2(\gamma^2-\varepsilon)c}\le \E(X_i)\le\frac{c^2(\gamma^2-\varepsilon)}{9\log k+2(\gamma^2-\varepsilon)c}.
			\]
		\end{linenomath*}
		
		Chernoff-Hoeffding implies that $\Pr[X_i>(1+\delta)\cdot\E(X_i)]\le e^{\frac{-\E(X_i)\cdot \delta^2}{3}}$. We set $\delta=\frac{9\log k+2(\gamma^2-\varepsilon)c}{(\gamma^2-\varepsilon)c}-1$, then we have
		\begin{linenomath}
			\begin{align*}
				\Pr\left[X_i>c\right]&=\Pr\left[X_i>(1+\delta)\cdot 
				\frac{c^2(\gamma^2-\varepsilon)}{9\log k+2(\gamma^2-\varepsilon)c}\right]\\
				&\le \Pr\left[X_i>(1+\delta)\cdot\E(X_i)\right]\\
				&\le e^{\frac{-\E(X_i)\cdot \delta^2}{3}}\\
				&\le e^{\frac{-1}{3}\cdot \frac{c^2(\gamma^2-\varepsilon)^2}{9\log k+2(\gamma^2-\varepsilon)c}\cdot \delta^2}\\
				&\le e^{\frac{-1}{3}\cdot \frac{c^2(\gamma^2-\varepsilon)^2}{9\log k+2(\gamma^2-\varepsilon)c}\cdot (\delta+1)(\delta-1)}\\
				&= \frac{1}{k^3}.
			\end{align*}
		\end{linenomath}
		
		Using union bound, with probability at least $1-\frac{1}{k^3}\cdot k=1-\frac{1}{k^2}$, we have that if we randomly delete $\frac{kc^2\gamma(\gamma^2-\varepsilon)}{9\log k+2(\gamma^2-\varepsilon)c}=O(\frac{kd^2}{\log k+d})$ edges from $G_0$, there is no cluster that is deleted more than $c$ edges. Therefore, according to item $1$ of Theorem \ref{thm:robust-result}, with probability at least $1-\frac{1}{k^3}\cdot k=1-\frac{1}{k^2}$, $G$ is a $(k,\frac{1}{8}\varphi^2,\varepsilon)$-clusterable graph. The statement of item $2$ in this theorem also follows from the same augments as those in the proof of Theorem \ref{thm:main-result} with parameter $\varphi'=\frac{1}{8}\varphi^2$ in $G$. 
	\end{proof}
\end{document}